\theoremstyle{remark}
\newtheorem{RQ}{RQ}
\newcommand{\doneByFD}[1]{\color{magenta}{#1}\color{black}}
\newcommand{\extslice}{extended slice\xspace}
\newcommand{\conextslice}{cut\xspace}
\newcommand{\conextslices}{cuts\xspace}
\newcommand{\etal}{\emph{et al.}}
\newcommand{\pdepa}{\textsf{pdepa}\xspace}
\begin{document}

\begin{CCSXML}
<ccs2012>
<concept>
<concept_id>10011007.10011074.10011075.10011077</concept_id>
<concept_desc>Software and its engineering~Software design engineering</concept_desc>
<concept_significance>500</concept_significance>
</concept>
<concept>
<concept_id>10011007.10011074.10011092.10011096.10011097</concept_id>
<concept_desc>Software and its engineering~Software product lines</concept_desc>
<concept_significance>500</concept_significance>
</concept>
<concept>
<concept_id>10011007.10010940.10010971.10010980.10010985</concept_id>
<concept_desc>Software and its engineering~Feature interaction</concept_desc>
<concept_significance>300</concept_significance>
</concept>
<concept>
<concept_id>10011007.10010940.10010971.10011682</concept_id>
<concept_desc>Software and its engineering~Abstraction, modeling and modularity</concept_desc>
<concept_significance>300</concept_significance>
</concept>
<concept>
<concept_id>10011007.10011006.10011072</concept_id>
<concept_desc>Software and its engineering~Software libraries and repositories</concept_desc>
<concept_significance>300</concept_significance>
</concept>
<concept>
<concept_id>10011007.10011074</concept_id>
<concept_desc>Software and its engineering~Software creation and management</concept_desc>
<concept_significance>300</concept_significance>
</concept>
</ccs2012>
\end{CCSXML}

\ccsdesc[500]{Software and its engineering~Software product lines}
\ccsdesc[300]{Software and its engineering~Feature interaction}
\ccsdesc[300]{Software and its engineering~Abstraction, modeling and modularity}
\ccsdesc[300]{Software and its engineering~Software libraries and repositories}
\ccsdesc[300]{Software and its engineering~Software creation and management}

\copyrightyear{2020}
\acmYear{2020}
\setcopyright{acmlicensed}
\acmConference[ICSE '20]{42nd International Conference on Software Engineering}{May 23--29, 2020}{Seoul, Republic of Korea}
\acmBooktitle{42nd International Conference on Software Engineering (ICSE '20), May 23--29, 2020, Seoul, Republic of Korea}
\acmPrice{15.00}
\acmDOI{10.1145/3377811.3380372}
\acmISBN{978-1-4503-7121-6/20/05}

\title{Lazy Product Discovery in Huge Configuration Spaces}

\author[M. Lienhardt]{Michael Lienhardt}
\affiliation{
	\institution{ONERA \textminus\ The French Aerospace Lab}
	\country{France}
}
\email{michael.lienhardt@onera.fr}
\author[F. Damiani]{Ferruccio Damiani}
\affiliation{
	\institution{University of Turin}
	\country{Italy}
}
\email{ferruccio.damiani@unito.it}
\author[E.~B. Johnsen]{Einar Broch Johnsen}
\affiliation{
	\institution{University of Oslo}
	\country{Norway}
}
\email{einarj@ifi.uio.no}
\author[J. Mauro]{Jacopo Mauro}
\affiliation{
	\institution{University of Southern Denmark}
	\country{Denmark}
}
\email{mauro@sdu.dk}

\begin{abstract}
  Highly-configurable software systems can have thousands of
  interdependent configuration options across different subsystems. In
  the resulting configuration space, discovering a valid product
  configuration for some selected options can be complex and error
  prone.  The configuration space can be organized using
  a feature model, fragmented into smaller interdependent feature
  models reflecting the configuration options of each subsystem.

  We propose a method for lazy product discovery in large fragmented
  feature models with interdependent features.  We formalize the
  method and prove its soundness and completeness.  The evaluation
  explores an industrial-size configuration space. The results show
  that lazy product discovery has significant performance benefits
  compared to standard product discovery, which in contrast to our
  method requires all fragments to be composed to analyze the feature
  model.  Furthermore, the method succeeds when more efficient,
  heuristics-based engines fail to find a valid configuration.
\end{abstract}

\keywords{Software Product Lines, Configurable Software, Variability Modeling, Feature Models, Composition, Linux Distribution}

\maketitle

\section{Introduction}\label{sec:introduction}

Highly-configurable software systems can have thousands of
interdependent configuration options across different subsystems. In
the resulting configuration space, different software variants can be
obtained by selecting among these configuration options.  The
interdependencies between options are typically caused by interaction
in the resulting software system.  Constructing a well-functioning
software variant can be a complex and error-prone
process~\cite{DBLP:books/daglib/0032924}.

\emph{Feature models}~\cite{Batory:2005} allow us to organize the
configuration space and facilitate the construction of software
variants by describing configuration options using interdependent
\emph{features}~\cite{KangFeatureOrientedDomain1990}: a feature is a
name representing some functionality, a set of features is called a
\emph{configuration}, and each software variant is identified by a
\emph{valid configuration} (called a \emph{product}, for short).

Highly-configurable software systems can consist of thousands of
features and combine several subsystems
\cite{berger10ase,berger13vamos,tartler11eurosys,lienhardtDDP18}, each
with different features.  The construction and maintenance of feature
models with thousands of features for such highly-configurable
systems, can be simplified by representing large feature models as
sets of smaller interdependent feature
models~\cite{berger13vamos,Rosenmuller:2011:MVM} which we call
\emph{fragments}.  However, the analysis of such fragmented feature
models usually requires the fragments to be composed, to enable the
application of existing analysis
techniques~\cite{Benavides07fama:tooling,Mendonca:2009:SSP,
  DBLP:journals/is/BenavidesSC10,Segura:2012:BBT,THUM201470,
  Thum-EtAl:ACM-CS-2014}.  To this aim, many approaches for composing
feature models from fragments have been
investigated~\cite{10.1007/978-3-642-13595-8_3,10.1007/978-3-642-41533-3_22,10.1007/978-3-642-21210-9_11,CLASSEN20111130,Rosenmuller-et-al:McGPLE-2008,Schroter:2013:AAD}.

The analysis of fragmented feature models can be simplified if
suitable abstractions can safely replace some of the feature model
fragments in the analysis.  This simplification can be realized by
means of \emph{feature-model interfaces}~\cite{SchroterICSE2016}.  A
feature-model interface is a feature model that hides some of the
features and dependencies of another feature model (thus, interfaces
are closely related to \emph{feature-model
  slicing}~\cite{DBLP:conf/kbse/AcherCLF11}).  An interface can be
used instead of a feature model fragment to simplify the overall
feature model.  For certain analyses, working on the simplified
feature model produces results that also hold for the original feature
model and for any feature model where the interface is replaced by a
fragment compatible with the interface.

This paper addresses automated \emph{product discovery} in large
configuration spaces represented as sets of interdependent feature
models. Product discovery (sometimes called product configuration) is
a particular analysis for finding a product which includes a desired
set of features~\cite{DBLP:journals/computing/GalindoBTGR19}.  We aim
at automatically discovering a product that contains a given set of
features from the feature model fragments, without having to compose
all the fragments to apply the analysis.  This work is motivated by
our recent experiences in applying techniques for variability modeling
to automated product discovery in industrial use cases such as
Gentoo~\cite{gentoo}, a source-based Linux distribution that consists
of many highly-configurable packages.  The March 1st 2019 version of
the Gentoo distribution comprises 671617 features spread across 36197
feature models.  Gentoo's huge configuration space can be seen as the
composition of the feature models for all its packages, where package
interdependencies are modeled as shared features.  Gentoo's official
package manager and distribution system
\textsf{Portage}~\cite{Portage} achieves (via its \textsf{emerge}
tool) efficiency at the expense of completeness; i.e., in some cases
this tool fails to discover a product that contains a given set of
features, although such a product exists.  We show that feature model
interfaces~\cite{SchroterICSE2016}, which were developed to support
analysis reuse for feature model evolution in fragmented feature
models, do not allow us to reach our aim of complete and efficient
automated product discovery.

We propose a novel method for product discovery in sets of
interdependent feature models.  The proposed method is lazy in the
sense that features are added incrementally to the analysis until a
product is found.  We provide a formal account of the method and
evaluate it by implementing an efficient and complete dependency
solver for Gentoo. In short, our contributions are:

\begin{enumerate}
\item we strengthen feature model interfaces to enable lazy product
  discovery in sets of interdependent feature models;
\item we propose an efficient and complete algorithm for lazy product
  discovery in huge configuration spaces;
\item we provide an open-source implementation of the proposed
  algorithm;\footnote{The lazy product-discovery tool is available at
    \href{https://github.com/gzoumix/pdepa}{
      https://github.com/gzoumix/pdepa} and at
    \href{https://archive.softwareheritage.org/browse/origin/https://github.com/gzoumix/pdepa.git}{archive.softwareheritage.org/browse/origin/https://github.com/gzoumix/pdepa.git}}
  and
\item we evaluate the potential of lazy product discovery in terms of
  experiments on an industrial-size configuration space.\footnote{The
    evaluation artifact is available at
    \url{https://doi.org/10.6084/m9.figshare.11728914.v4} and
    \url{https://doi.org/10.5281/zenodo.3633643}}
\end{enumerate}

\section{Motivation and Overall Concept}\label{sec:motivation}
A software system like the Gentoo distribution comprises 36197
configurable packages, as of its March 1st 2019 version.  The
configuration space of each package can be represented by a feature
model; the overall configuration space of Gentoo can then be
represented by a feature model that is the composition of the feature
models of the 36197 packages.  The resulting feature model has 671617
features, and thus a configuration space with up to $2^{671617}$
solutions.

Gentoo's official package manager \textsf{Portage} implements an
optimized, heuristics-based product-discovery algorithm to find
products in this configuration space. This algorithm is not complete;
i.e., it fails to solve some product-discovery problems that have
solutions.  To the best of our knowledge, existing complete
product-discovery approaches need to load the entire feature model to
find products. Consequently, they do not scale to product-discovery
problems of the size of Gentoo's configuration space.

In this paper we target product discovery in huge configuration
spaces, such as for Gentoo, that can be described by a feature model
represented as a set $S$ of feature models with shared features, where
loading the overall feature model (i.e., the whole set $S$) is too
expensive.  We propose \emph{lazy product discovery}, a
product-discovery method that loads the elements of $S$ incrementally,
until it finds a product of the overall feature model.  The method
relies on the notion of a \emph{\conextslice\ of a feature model
  $\elFM$ for a set of features $Y$}. This is a feature model $\elFM'$
whose products are products of $\elFM$ and include all the products of
$\elFM$ that contain a feature in $Y$.

The proposed algorithm, shown in Listing~\ref{lst:algo}, takes as
input a set $S$ of feature models with shared features and a set $c$
of features to be included in the discovered product.  After
initialization, the algorithm incrementally loads \conextslices until
a solution has been found.  Let $\mathcal{M}_0$ denote the composition
of the feature models in $S$.  The algorithm returns a (not
necessarily minimal) product of $\mathcal{M}_0$ which includes all
features in $c$, whenever such a product exists; otherwise, it returns
the special value \textsf{None}.  The algorithm relies on the
following three auxiliary functions:
\begin{enumerate}
\item $\texttt{pick\_cut}(\elFM,Y)$: a function that, given a
  feature model $\elFM$ and a set of features $Y$, returns a \conextslice\ of
  $\elFM$ for $Y$;
\item $\texttt{compose}(\{\elFM_1, \dots, \elFM_n \})$: a function
  that, given a set of feature models $\elFM_1, \dots, \elFM_n$,
  returns the composition of the feature models in the set; and
\item $\texttt{select}(\elFM,c)$: a function that, given a feature
  model $\elFM$ and a set of features $c$, returns a product of
  $\elFM$ containing all the features in $c$ if it exists, and
  \lstinline[language=Algo]|None| otherwise.
\end{enumerate}

\begin{figure}[t]
\begin{lstlisting}[frame=top,frame=bottom,framexleftmargin=17pt,
  framexrightmargin=0pt,
  framexbottommargin=5pt,
  framextopmargin=5pt,language=Algo,caption={Lazy product-discovery algorithm},label={lst:algo},numbers=left,xleftmargin=5.0ex]
input $S$: set of feature models  $\label{line:fm}$
input $c$: configuration $\label{line:conf}$
var $Y$ = $c$ $\label{line:examined}$
var $\elFM'$ = $\texttt{compose}(\{\texttt{pick\_cut}(\elFM,Y) | \elFM \in S \})$ $\label{line:fp-start}$
var $\text{solution}$ = $\text{select}(\elFM', c)$ $\label{line:lookup1}$
while (solution $\neq$ None $\wedge$ solution $\not\subseteq Y$): $\label{line:while}$
  $Y$ = $Y \cup \text{solution}$ $\label{line:update-exmined}$
  $\elFM'$ = $\texttt{compose}(\{\texttt{pick\_cut}(\elFM,Y) | \elFM \in S \})$ $\label{line:update-cut}$
  $\text{solution}$ = $\text{select}(\elFM', c)$ $\label{line:fp-end}$ $\label{line:update-res}$
return $\text{solution}$ $\label{line:solution}$
\end{lstlisting}
\end{figure}

Assuming that the auxiliary functions (1), (2) and (3) work, we have
that on Line~\ref{line:while} the following loop invariants hold:
\begin{description}
\item[Inv1:]  $c\subseteq Y$.
\item[Inv2:] \textsf{solution} is a product of $\mathcal{M}'$ which
  includes all features in $c$, whenever such a product exists;
  otherwise \textsf{solution}  is the special value \lstinline[language=Algo]|None|.
\item[Inv3:] if \textsf{solution} is a product of $\mathcal{M}'$ and
  $\textsf{solution} \subseteq Y$, then \textsf{solution} is also a
  product of $\mathcal{M}_0$.
\item[Inv4:] If $\mathcal{M}'$ has no product which includes all
  features in $c$, then neither does $\mathcal{M}_0$.
\end{description}
Checking that \textbf{Inv1} holds is straightforward: just observe
that on Line~\ref{line:examined} the variable $Y$ is initialized to
$c$ and that at each iteration of the \lstinline[language=Algo]|while|
loop new features are added to $Y$ on Line~\ref{line:update-exmined}.
Checking that \textbf{Inv2} holds is equally straightforward:
according to the description of the auxiliary functions (1), (2)
and (3), the invariant is established on Lines~\ref{line:fp-start}
and~\ref{line:lookup1} as well as on Lines~\ref{line:update-cut}
and~\ref{line:update-res}.  The fact that \textbf{Inv3} and
\textbf{Inv4} hold is shown in the proof of Theorem~\ref{thm:lpd} in
Section~\ref{sec:product}.  The algorithm terminates because at each
iteration of the \lstinline[language=Algo]|while| loop, the size of
the set $Y$ (which, by construction, only contains features from the
features models in $S$) increases.  When the algorithm terminates we
have that either \textsf{solution} = \lstinline[language=Algo]|None|
or \lstinline[language=Algo]|None| $\not=$ \textsf{solution}
$\subseteq Y$.  In the first case (by \textbf{Inv4}) we have that
$\mathcal{M}_0$ has no product that contains all the features in $c$,
while in the second case (by \textbf{Inv3}) we have that
\textsf{solution} is a product of $\mathcal{M}_0$ that contains all
the features in $c$.

The laziness of this algorithm stems from the fact that it does not
need to consider $\mathcal{M}_0$ at once.  Instead, the algorithm
starts by considering the composition of the \conextslices\ of the feature
models for $Y=c$ and then iterates by considering bigger and bigger
\conextslices\ until the candidate solution is contained in the set $Y$. When
this happens we know, for the properties of the \conextslice, that the found
solution is also a solution for $\mathcal{M}_0$.

The algorithm's efficiency in finding a product with the features in
$c$ (see Lines~\ref{line:fp-start}, \ref{line:lookup1} and
\ref{line:update-cut}, \ref{line:update-res} of
Listing~\ref{lst:algo}) compared to executing
$\texttt{select}(\elFM_0,c)$, depends on the degree to which the
feature models in $S$ are such that:
 \begin{itemize}
 \item[-] 
computing  
 $\texttt{pick\_cut}(\elFM,Y)$ is
   efficient,
 \item[-] the feature models $\elFM'$ are small compared to $\elFM_0$,
 \item[-]
   $\texttt{select}(\elFM',c)$ performs better than
   $\texttt{select}(\elFM_0,c)$, and
 \item[-] a small number of iterations of the
   \lstinline[language=Algo]|while|-loop is required.
\end{itemize}

For the Gentoo distribution, each feature model $\elFM_i$ in $S$ has a
distinguished feature $f_i$ such that the constraints expressed by
$\elFM_i$ are enabled only if $f_i$ is selected (see
Section~\ref{sec:ExpDesign}). This reflects that each $\elFM_i$
corresponds to a Gentoo package that is installed if and only if $f_i$
is selected. Therefore, the function $\texttt{pick\_cut}(\elFM,Y)$ can
be efficiently implemented by returning $\elFM_i$ if $f_i\in Y$, and
by returning a feature model that expresses no constraints (and can,
therefore, be ignored by the composition that builds $\elFM'$)
otherwise.

The rest of this paper is organized as follows:
Sections~\ref{sec:feature-model}--\ref{sec:abstration-composition}
provide a formal account of the lazy product-discovery method that
culminates in the proof that \textbf{Inv3} and \textbf{Inv4} hold,
Section~\ref{sec:practice} evaluates the performance of the lazy
product-discovery algorithm by means of experiments, and
Sections~\ref{sec:related-work} and~\ref{sec:conclusion} discuss
related work and conclude the paper, respectively.

\section{A Formalization of Feature Models}\label{sec:feature-model}

This section  presents a formalization of feature models (FM)
and related notions, including feature model interfaces and composition.

\subsection{Feature Model Representations}\label{sec:feature-model-representations}

Different representations of feature models are discussed, e.g.,  by
Batory~\cite{Batory:2005}.  In this paper, we will rely on the
\emph{propositional formula} representation of feature models. In this
representation, a feature model is given by a pair $(\elFMF,\phi)$
where:
\begin{itemize}
\item[-]
 $\elFMF$ is a set of features, and
 \item[-]
 $\phi$ is a
propositional formula where the variables 
$x$ are feature names:
$
\;\phi \; ::=\; x \;\vert\; \phi \wedge \phi \;\vert\; \phi \vee \phi \;\vert\; \phi \rightarrow \phi \;\vert\; \neg \phi.
$
\end{itemize}
A propositional formula $\phi$ over a set of features $\elFMF$
represents the feature models whose products are configurations
$\{x_1,...,x_n\}\subseteq\elFMF$ ($n\ge 0$) such that $\phi$ is
satisfied by assigning value true to the variables $x_i$
($1\le i\le n$) and false to all other variables.

\begin{example}[A propositional representation of glibc FM]\label{example-fm-pf}
  Gentoo packages can be configured by selecting features (called
  \emph{use flags} in Gentoo), which may trigger dependencies or
  conflicts between packages.  Version 2.29 of the {\em glibc}
  library, that contains the core functionalities of most Linux
  systems, is provided by the package {\sf sys-libs/glibc-2.29-r2}
  (abbreviated to {\sf glibc} in the sequel).  This package has many
  dependencies, including (as expressed in Gentoo's notation):
{\rm \begin{lstlisting}[language=EBUILD] 
doc? ( sys-apps/texinfo )
      vanilla?( !sys-libs/timezone-data )
\end{lstlisting}}
This dependency expresses that {\sf glibc} requires the {\sf texinfo} documentation generator (provided by any version of the {\sf sys-apps/texinfo} package) whenever the feature {\sf doc} is selected
 and if the feature {\sf vanilla} is selected, then {\sf glibc} conflicts with any version of the time zone database (as stated with the {\sf !sys-libs/timezone-data} constraint).
These {\em dependencies} and {\em conflicts} can be expressed by a
feature model $(\elFMF_{\text{\sf glibc}},\phi_{\text{\sf glibc}})$ where
{\small 
\begin{align*}
\elFMF_{\text{\sf glibc}} =  & \;\{\text{{\sf glibc}, {\sf txinfo}, {\sf tzdata}, {\sf glibc:doc}, {\sf glibc:v}}\}, \mbox{ and}\\
\phi_{\text{\sf glibc}} =  & \;\text{\sf glibc} \rightarrow ((\text{\sf glibc:doc} \rightarrow \text{\sf txinfo}) \land (\text{\sf glibc:v} \rightarrow (\neg \text{\sf tzdata})).
\end{align*}} 
\noindent
Here, the feature
 {\sf glibc} represents the {\sf glibc} package;
 {\sf txinfo} represents any {\sf sys-apps/texinfo} package;
 {\sf tzdata} represents any version of the  {\sf sys-libs/timezone-data} package;
 and {\sf glibc:doc} (resp. {\sf glibc:v}) represents the {\sf glibc}'s {\sf doc} (resp. {\sf vanilla}) use flag.
\end{example}

The propositional representation of feature models works well in
practice~\cite{Mendonca:2009:SAF,DBLP:journals/is/BenavidesSC10,Thum-EtAl:ACM-CS-2014}
and we shall use it for the evaluation of the proposed method  (in Section~\ref{sec:practice}).
In contrast, to simplify the proofs, we follow Schr{\"o}ter
\etal~\cite{SchroterICSE2016} in using an extensional representation
of feature models to present our theory.

\begin{definition}[Feature model, extensional representation]\label{def:fm}
A {\em Feature Model} $\elFM$ is a pair $(\elFMF,\elFMP)$ where
 $\elFMF$ is a set of features and $\elFMP\subseteq2^{\elFMF}$ a set of 
products.
\end{definition}

\begin{example}[An extensional representation of glibc FM]
\label{example-fm}
Let $2^{X}$ denote the powerset of $X$.  The feature model of
Example~\ref{example-fm-pf} can be given an extensional representation
$\elFM_{\text{\sf glibc}}=(\elFMF_{\text{\sf glibc}},\elFMP_{\text{\sf
    glibc}})$ where $\elFMF_{\text{\sf glibc}}$ is the same as in
Example~\ref{example-fm-pf} and
{\small \begin{align*}
\elFMP_{\text{\sf glibc}} =& \{\{ \text{{\sf glibc}} \}, \{\text{{\sf glibc}, {\sf txinfo}} \}, \{\text{{\sf glibc}, {\sf tzdata}} \}, \{\text{{\sf glibc}, {\sf txinfo}, {\sf tzdata}} \}\}\ \cup\\
        & \{\{ \text{{\sf glibc}, {\sf glibc:doc}, {\sf txinfo}} \}, \{ 
\text{{\sf glibc}, {\sf glibc:doc}, {\sf txinfo}, {\sf tzdata}} \}\}\ \cup\\
        & \{\{ \text{{\sf glibc}, {\sf glibc:v}} \}, \{ \text{{\sf glibc}, {\sf glibc:v}, {\sf txinfo}} \}\}\ \cup\\
        & \{\{ \text{{\sf glibc}, {\sf glibc:doc}, {\sf glibc:v}, {\sf txinfo}} \}\}\ \cup\\
        & 2^{\{\text{{\sf txinfo}, {\sf tzdata}, {\sf glibc:doc}, {\sf glibc:v}}\}}.
\end{align*}}
\noindent
In the description of $\elFMP_{\text{\sf glibc}}$, the first line
contains products with {\sf glibc} but none of its use flags are
selected, so {\sf texinfo} and {\sf tzdata} can be freely installed;
the second line contains products with the use flag {\sf doc} selected
in {\sf glibc}, so a package of {\sf sys-apps/texinfo} is always
required; the third line contains products with the use flag {\sf
  vanilla} selected in {\sf glibc}, so no package of {\sf
  sys-libs/timezone-data} is allowed; the forth line contains products
with both {\sf glibc}'s use flags selected, so {\sf sys-apps/texinfo}
is mandatory and {\sf sys-libs/timezone-data} forbidden; finally, the
fifth line represents products without {\sf glibc}, so all
combinations of other features are possible, including the empty set.
\end{example}

\begin{definition}[Empty FM, void FMs, and pre-products]\label{def:FM-empty}
The \emph{empty} feature model, denoted  $\elFM_{\emptyset} =
(\emptyset,\{\emptyset\})$,  has no features and has just the empty product 
$\emptyset$.
A \emph{void} feature model is a feature model that has no products,
i.e., it has the form $(\elFMF,\emptyset)$ for some $\elFMF$.
A \emph{pre-product} of a feature model $\elFM$ is a configuration $c$
that can be extended to a product of $\elFM$ (more formally,
$c \subseteq p$ for some product $p$ of $\elFM$).
\end{definition}

Based on the above definition of a pre-product, we identify two related search
problems.

\begin{definition}[Feature compatibility, product discovery]\label{def:FM-compatibility}
  Consider a feature model $\elFM$ and a set of features $c$ in
  $\elFM$. The \emph{feature-compatibility problem} for $c$ in $\elFM$
  is the problem of determining whether $c$ is a pre-product of
  $\elFM$ (i.e., whether the features in $c$ are compatible with the
  products in $\elFM$).
  The \emph{product-discovery problem} for $c$ in $\elFM$ is the
  problem of finding a product of $\elFM$ that extends $c$.
\end{definition}
Clearly, the feature-compatibility problem for $c$ in $\elFM$ has a positive answer if and only if 
the product-discovery problem for $c$  in $\elFM$  has a solution.

\subsection{Feature Model Interfaces}\label{sec:feature-model-interface}
\emph{Feature model interfaces} were defined by Schr{\"o}ter
\etal~\cite{SchroterICSE2016} as a binary relation $\preceq$,
expressing that a feature model $\elFM'$ is an interface of a feature
model $\elFM$ if $\elFM'$ ignores some features of $\elFM$.

\begin{definition}[FM interface relation]\label{def:interface} 
 A feature  
model  $\elFM'=(\elFMF' ,$ $\elFMP')$ is  an \emph{interface} of feature 
model $\elFM=(\elFMF,\elFMP)$, 
denoted as $\elFM'\preceq\elFM$, iff
 $\elFMF'\subseteq\elFMF$ and
 $\elFMP'=\{p\cap\elFMF' \sht p\in\elFMP\}$. 
\end{definition}
Note that, for all feature models $\elFM'=(\elFMF',$ $\elFMP')$ and
$\elFM$, if $\elFM'\preceq\elFM$ then (i) all products of $\elFM'$
are pre-products of $\elFM$ and (ii) $\elFM'$ is the only interface
of $\elFM$ which has exactly the features $\elFMF'$ (i.e., $\elFM'$ is
completely determined by $\elFMF'$).

\begin{example}[An interface for glibc FM]
\label{example-interface}
The feature model
{\small \begin{align*}
\elFMF =& \{ \text{{\sf glibc}, {\sf glibc:v}} \}\\
\elFMP =& \{ \emptyset, \{\text{{\sf glibc}}\}, \{\text{{\sf glibc}, {\sf glibc:v}}\}\} 
\end{align*}}
\noindent
is the interface of the feature model $\elFM_{\text{\sf glibc}}$ from
Example~\ref{example-fm} that is determined by the features {\sf
  glibc} and {\sf glibc:v}. 
\end{example}

The interface relation for feature models is a \emph{partial order}
(i.e., it is reflexive, transitive and anti-symmetric) and the empty
feature model $\elFM_{\emptyset}$ is an interface of every non-void
feature model $\elFM$.  Moreover, $\elFM$ is void if and only if
$(\emptyset,\emptyset)\preceq\elFM$.

The notion of a feature model interface is closely related to that of
a \emph{feature model slice}, which was defined by Acher
\etal~\cite{DBLP:conf/kbse/AcherCLF11} as a unary operator
$\fmslice{Y}$ restricting a feature model to a set $Y$ of features.
Given a feature model $\elFM$, $\fmslice{Y}(\elFM)$ is the feature
model obtained from $\elFM$ by removing the features not in $Y$.

\begin{definition}[FM slice operator]\label{def:sclice} 
The \emph{slice operator} $\fmslice{Y}$ on feature models, where $Y$ is a set 
of features, is defined by:
\[
\fmslice{Y}((\elFMF,\elFMP))=(\elFMF\cap Y,\{p\cap Y\sht p\in\elFMP\}). 
\]
\end{definition}

Note that, for every feature model $\elFM=(\elFMF,\elFMP)$ and set of
features $Y$, the feature model $\fmslice{Y}(\elFM)=(\elFMF',\elFMP')$
is the unique interface of $\elFM$ such that $\elFMF'=\elFMF\cap Y$.
Moreover, for  every interface $\elFM_1=(\elFMF_1,\elFMP_1)$ of 
$\elFM$ it holds that $\elFM_1=\fmslice{\elFMF_1}(\elFM)$.

\begin{example}[A slice of glibc FM] \label{example-slice} The feature
  model interface in Example~\ref{example-interface} can be obtained
  by applying $\fmslice{\{\text{{\sf glibc}, {\sf glibc:v}}\}}$ to the
  feature model $\elFM_{\text{\sf glibc}}$ of
  Example~\ref{example-fm}. 
\end{example}

\subsection{Feature Model Composition}\label{sec:feature-model-composition}

Highly-configurable software systems often consist of many
interdependent, configurable
packages~\cite{Lotufo:2010:ELK,gentoo,lienhardtDDP18}.  The
variability constraints of each of these packages can be represented
by a feature model.  Therefore, configuring two (or more packages) in
such a way that they can be installed together corresponds to
identifying a product in a suitable \emph{composition} of their associated
feature models. In the propositional representation of feature models, such
composition corresponds to logical conjunction; i.e., the composition
of two feature models $(\elFMF_1,\phi_1)$ and $(\elFMF_2,\phi_2)$ is
the feature model
\[
(\elFMF_1\cup\elFMF_2, \phi_1\wedge\phi_2).
\]
In the extensional representation of feature models, this form of
composition corresponds to the binary operator $\fmplus$ of
Schr{\"o}ter \etal~\cite{SchroterICSE2016}, which is similar to the
join operator from  relational algebra~\cite{Codd:1970:RMD}.

\begin{definition}[FM composition]\label{def:composition}
The composition of two feature models $\elFM_1=(\elFMF_1, \elFMP_1)$ and  
$\elFM_2=(\elFMF_2, \elFMP_2)$, 
denoted $\elFM_1\mathrel\fmplus\elFM_2$, is the feature model defined by:
$$\elFM_1\mathrel\fmplus\elFM_2 = (\elFMF_1 \cup \elFMF_2 , \{p \cup q \, \sht  
\, p \in \elFMP_1, q \in \elFMP_2 , p \cap \elFMF_2 = q \cap \elFMF_1 \}).$$
\end{definition}

The composition operator $\fmplus$ is associative and commutative,
with $\elFM_{\emptyset}$ as identity element (i.e.,
$\elFM\mathrel\fmplus\elFM_{\emptyset}=\elFM$). Composing a feature
model with a void feature model yields a void feature model:
$(\elFMF_1, \elFMP_1)\mathrel\fmplus(\elFMF_2, \emptyset)$ $=$
$(\elFMF_1\cup\elFMF_2, \emptyset)$.

\begin{example}[Composing glibc and gnome-shell FMs]
\label{example-composition}
Let us consider another important package of the Gentoo distribution:
{\em gnome-shell}, a core component of the Gnome Desktop environment.
Version 3.30.2 of {\em gnome-shell} is provided by the package {\sf
  gnome-base/gnome-shell-3.30.2-r2} (abbreviated to {\sf g-shell} in
the sequel), and its dependencies include the following statement:
{\rm \begin{lstlisting}[language=EBUILD]
  networkmanager?( sys-libs/timezone-data ).
\end{lstlisting}}
\noindent
This dependency expresses that {\sf g-shell} requires any version of
the time zone database when the feature {\sf networkmanager} is
selected.

The \emph{propositional representation} of this dependency can be
captured by the feature model $(\elFMF_{\text{\sf g-shell}},$ $\phi_{\text{\sf g-shell}})$, where
{\small \begin{align*}
\elFMF_{\text{\sf g-shell}} =&\; \{ \text{{\sf g-shell}, {\sf tzdata}, {\sf g-shell:nm}} \}, \mbox{ and}\\
\phi_{\text{\sf g-shell}} =&\; \text{\sf g-shell} \rightarrow (\text{\sf g-shell:nm} \rightarrow \text{\sf tzdata}).
\end{align*}}
\noindent 
The corresponding \emph{extensional representation} of this feature model is 
$\elFM_{\text{\sf g-shell}} = (\elFMF_{\text{\sf g-shell}},$ $\elFMP_{\text{\sf 
g-shell}})$, where:
{\small \begin{align*}
  \elFMP_{\text{\sf g-shell}} =& \{\{ \text{{\sf g-shell}} \}, \{ \text{{\sf g-shell}, {\sf tzdata}} \}\} \cup \\
         & \{\{ \text{{\sf g-shell}, {\sf tzdata}, {\sf g-shell:nm}} \}\} \cup\\
         & 2^{\{ \text{{\sf tzdata}, {\sf g-shell:nm}}\}}.
\end{align*}}
\noindent
Here,
 the first line contains products with {\sf g-shell} but none of its
 use flags are selected: {\sf tzdata} can be freely selected;
 the second line is the product where {\sf g-shell:nm} is also
 selected and {\sf tzdata} becomes mandatory;
 finally, the third line represents products without {\sf g-shell}.

The \emph{propositional representation} 
of the
composition is the feature model $(\elFMF_{\text{\sf
      full}},\phi_{\text{\sf full}})$, where
{\small \begin{align*}
\elFMF_{\text{\sf full}}
  =&\; \elFMF_{\text{\sf glibc}}\cup\elFMF_{\text{\sf g-shell}}\\
  =&\; \{ \text{{\sf glibc}, {\sf txinfo}, {\sf tzdata}, {\sf g-shell}, {\sf glibc:doc}, {\sf glibc:v}, {\sf g-shell:nm}} \}, \mbox{ and}\\
\phi_{\text{\sf full}}
  =&\; \phi_{\text{\sf glibc}}\land\phi_{\text{\sf g-shell}} \\ 
  =&\; (\text{\sf glibc} \rightarrow ((\text{\sf glibc:doc} \rightarrow 
\text{\sf txinfo}) \land (\text{\sf glibc:v} \rightarrow (\neg \text{\sf 
tzdata}))) \land\\
   &\; (\text{\sf g-shell} \rightarrow (\text{\sf g-shell:nm} \rightarrow 
\text{\sf tzdata})).
\end{align*}}
\noindent
The \emph{extensional representation} of the composition is the
feature model
$\elFM_{\text{\sf full}} = \elFM_{\text{\sf glibc}} \fmplus
\elFM_{\text{\sf g-shell}} = (\elFMF_{\text{\sf
    full}},\elFMP_{\text{\sf full}})$ where
{\small \begin{align*}
\elFMP_{\text{\sf full}}
  =\;& \elFMP_{\text{\sf glibc}}\cup \elFMP_{\text{\sf g-shell}}\cup 
2^{\{\text{{\sf txinfo}, {\sf tzdata},
       {\sf glibc:doc}, {\sf glibc:v}, {\sf g-shell:nm}}\}}\ \cup\\
     & \{\{ \text{{\sf glibc}, {\sf g-shell}} \} \cup p\sht p\in
       2^{\{\text{{\sf txinfo}, {\sf tzdata}}\}}\}\ \cup\\
     & \{\{ \text{{\sf glibc}, {\sf glibc:doc}, {\sf txinfo}, {\sf
       g-shell}} \} \cup p\sht p\in 2^{\{\text{{\sf tzdata}}\}}\}\ \cup\\
     & \{\{ \text{{\sf glibc}, {\sf glibc:v}, {\sf g-shell}} \}\cup
       p\sht p\in 2^{\{\text{{\sf txinfo}}\}}\}\ \cup\\
     & \{\{ \text{{\sf glibc, {\sf g-shell}}, {\sf g-shell:nm}, {\sf
       tzdata}} \}\cup p\sht p\in 2^{\{\text{{\sf txinfo}}\}}\}\ \cup\\
     & \{\{ \text{{\sf glibc}, {\sf glibc:doc}, {\sf glibc:v}, {\sf txinfo}, {\sf g-shell}} \}\}\ \cup\\
     & \{\{ \text{{\sf glibc}, {\sf glibc:doc}, {\sf txinfo}, {\sf g-shell}, {\sf g-shell:nm}, {\sf tzdata}} \}\}.
\end{align*}}
\noindent
Here, the first line contains the products where {\sf glibc} and {\sf
  g-shell} do not interact, i.e., either when they are not installed,
or only one of them is installed; the second line contains the
products where both {\sf glibc} and {\sf g-shell} are installed, but
without use flags selected, so all optional package can be freely
selected; the third line contains the products with the {\sf glibc}'s
use flag {\sf doc} selected, so {\sf sys-apps/texinfo} becomes
mandatory; the fourth line contains the products with the {\sf
  glibc}'s use flag {\sf vanilla} selected, so {\sf
  sys-libs/timezone-data} is forbidden; the fifth line contains the
products with the {\sf g-shell}'s use flag {\sf vanilla} network
manager, so {\sf sys-libs/timezone-data} is mandatory; the sixth line
contains the product with {\sf glibc}'s both use flags selected and
the seventh line contains the product with {\sf glibc}'s use flag {\sf
  doc} and {\sf g-shell}'s use flag {\sf networkmanager} are selected.
\end{example}

\section{Problem Statement}\label{sec:problemStatement}
Many case studies show that the size of feature models used to model
real configuration spaces can be challenging for both humans and
machines
\cite{Tartler:2012:CCA,berger13vamos,THUM201470,SchroterICSE2016},
including the feature model for the source-based Linux distribution
Gentoo~\cite{gentoo} mentioned above.  The state-of-the-art strategy
used to address this challenge is to represent large feature models
by sets of  smaller interdependent feature
models~\cite{berger13vamos,Rosenmuller:2011:MVM}. The resulting
interdependencies between different feature models can be
expressed using shared features~\cite{SchroterICSE2016}.

The feature compatibility problem for a given set of features (see
Definition~\ref{def:FM-compatibility}) can be decided without first
composing the considered feature models when the feature models are
disjoint, as it suffices to inspect each feature model independently.
Namely, feature-model slices can be used to formulate a
\emph{feature-compatibility criterion} for the case with no shared
features between the feature models, as shown by the following
theorem:

\begin{theorem}[Feature-compatibility criterion for disjoint FMs]\label{the:Feature-Compatibilty-Without-Shared-Features}
Consider the feature models $\elFM_i=(\elFMF_i,\elFMP_i)$ $(1\le i\le n)$ with pairwise no shared features (i.e., $1\le i\not=j\le n$ implies $\elFMF_i\cap\elFMF_j=\emptyset$).
Then a configuration $c$ is a pre-product of  the feature model $\elFM = \bigfmplus_{1\le i\le n}\elFM_i$ if and only if
 $c$ is a subset of $\bigcup_{1\le i\le n}\elFMF_i$ and for all $\elFM_i$ the configuration $c\cap\elFMF_i$ is a product of $\,\fmslice{c}(\elFM_i)$.
\end{theorem}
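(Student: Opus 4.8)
The plan is to first exploit the disjointness hypothesis to obtain an explicit description of the products of the composition, and then to reformulate condition (2) into a statement about pre-products of the individual feature models. With these two reformulations in hand, both directions of the biconditional reduce to short set-theoretic arguments.

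First I would observe that when the feature sets $\elFMF_i$ are pairwise disjoint, the compatibility side-condition in Definition~\ref{def:composition} is vacuous: for $p\in\elFMP_i$ and $q\in\elFMP_j$ with $i\neq j$ we have $p\subseteq\elFMF_i$ and $q\subseteq\elFMF_j$, so $p\cap\elFMF_j=\emptyset=q\cap\elFMF_i$ automatically. Using associativity and commutativity of $\fmplus$, an easy induction on $n$ then shows that the products of $\elFM=\bigfmplus_{1\le i\le n}\elFM_i$ are exactly the sets $\bigcup_{1\le i\le n}p_i$ with $p_i\in\elFMP_i$ for each $i$. Moreover, by disjointness, such a union satisfies $(\bigcup_i p_i)\cap\elFMF_j=p_j$ for every $j$, a fact I will use repeatedly.

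Next I would simplify condition (2). Unfolding Definition~\ref{def:sclice}, the products of $\fmslice{c}(\elFM_i)$ are the sets $p\cap c$ with $p\in\elFMP_i$. Since every $p\in\elFMP_i$ satisfies $p\subseteq\elFMF_i$, we always have $p\cap c\subseteq c\cap\elFMF_i$; hence $c\cap\elFMF_i$ is a product of $\fmslice{c}(\elFM_i)$ if and only if there is some $p\in\elFMP_i$ with $c\cap\elFMF_i\subseteq p$, i.e.\ exactly when $c\cap\elFMF_i$ is a pre-product of $\elFM_i$. This turns condition (2) into the clean statement that $c\cap\elFMF_i$ is a pre-product of $\elFM_i$ for every $i$.

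For the forward direction, given a product $p$ of $\elFM$ with $c\subseteq p$, I write $p=\bigcup_i p_i$ using the first step; then $c\subseteq\bigcup_i\elFMF_i$ is immediate, and for each $i$ the inclusion $c\cap\elFMF_i\subseteq p\cap\elFMF_i=p_i$ witnesses that $c\cap\elFMF_i$ is a pre-product of $\elFM_i$. For the backward direction, condition (2) supplies for each $i$ a product $p_i\in\elFMP_i$ with $c\cap\elFMF_i\subseteq p_i$; setting $p=\bigcup_i p_i$ gives a product of $\elFM$, while condition (1) yields $c=\bigcup_i(c\cap\elFMF_i)\subseteq p$, so $c$ is a pre-product of $\elFM$. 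I expect the only delicate point to be the reformulation of condition (2) — specifically verifying that ``product of the slice'' collapses to the one-sided inclusion $c\cap\elFMF_i\subseteq p$ — since the remaining steps are direct consequences of the disjoint-composition description.
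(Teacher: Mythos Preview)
Your proof is correct and follows essentially the same approach as the paper's: both decompose products of the disjoint composition into unions $\bigcup_i p_i$ with $p_i\in\elFMP_i$, and both relate condition (2) to the existence of $p_i\in\elFMP_i$ containing $c\cap\elFMF_i$. Your explicit reformulation of ``product of $\fmslice{c}(\elFM_i)$'' as ``pre-product of $\elFM_i$'' is a clean way to organize the argument; the paper instead works directly with the equality $c\cap\elFMF_i=p_i\cap c$ and concludes via $p\cap c=c$, but the underlying logic is the same.
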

\begin{proof}
Let $\elFM=(\elFMF,\elFMP)$.

{\bf Case $\Rightarrow$.}
Since $c$ is a pre-product of $\elFM$, by definition there exist $p\in\elFMP$ such that $c\subseteq p$.
Hence $c\subseteq\elFMF=\bigcup_{1\le i\le n}\elFMF_i$.
Let now consider $\fmslice{c}(\elFM_i)$ for any $1\le i\le n$: by definition $p\cap c\cap\elFMF_i$ is a product of this feature model, and by construction, $p\cap c\cap\elFMF_i=c\cap\elFMF_i$.
Hence, $c\cap\elFMF_i$ is a product of $\fmslice{c}(\elFM_i)$ for any $1\le i\le n$.

{\bf Case $\Leftarrow$.}
Since for any $1\le i\le n$, $c\cap\elFMF_i$ is a product of $\fmslice{c}(\elFM_i)$, there exist $p_i\in\elFMP_i$ such that $c\cap\elFMF_i=p_i\cap c$.
Let consider the configuration $p=\bigcup_{1\le i\le n} p_i$.
Since the feature models $\elFM_i$ do not share features, we have $p_i\cap\elFMF_j=\emptyset=p_j\cap\elFMF_i$ for all $1\le i\neq j\le n$.
Hence $p$ is a product of $\elFM$.
Moreover, we have that:
 $$p\cap c=\bigcup_{1\le i\le n} (p_i\cap c)=\bigcup_{1\le i\le n} (c\cap\elFMF_i)=c\cap\bigcup_{1\le i\le n} \elFMF_i=c.$$
Hence $c\subseteq p$ holds, which means that $c$ is a pre-product of $\elFM$.
\end{proof}

Unfortunately, the feature compatibility criterion of
Theorem~\ref{the:Feature-Compatibilty-Without-Shared-Features} does
not work for feature models with shared features. The problem can be
illustrated by the following example.

\begin{example}[Feature compatibility with shared features]\label{ex:failure-of-criterion}
  Consider the two feature models $\elFM_{\text{\sf glibc}}$ and
  $\elFM_{\text{\sf g-shell}}$ from Examples~\ref{example-fm}
  and~\ref{example-composition}, and the configuration
$c=\{\text{{\sf glibc}, {\sf glibc:v}, {\sf g-shell}, {\sf g-shell:nm}}\}.$
  We have
{\small \begin{align*}
  \fmslice{c}(\elFM_{\text{\sf glibc}})=&\;(\{\text{{\sf glibc}, {\sf glibc:v}}\}, 2^{\{\text{{\sf glibc}, {\sf glibc:v}}\}}), \mbox{ and}\\
  \fmslice{c}(\elFM_{\text{\sf g-shell}})=&\;(\{\text{{\sf g-shell}, {\sf g-shell:nm}}\}, 2^{\{\text{{\sf g-shell}, {\sf g-shell:nm}}\}}).
\end{align*}}
\noindent
Here, we have that
$c\subseteq\elFMF_{\text{\sf glibc}}\cup\elFMF_{\text{\sf g-shell}}$
and it is clear from the previous equation that
$c\cap\elFMF_{\text{\sf glibc}}=\{\text{{\sf glibc}, {\sf glibc:v}}\}$
is a product of $\fmslice{c}(\elFM_{\text{\sf glibc}})$ and that
$c\cap\elFMF_{\text{\sf g-shell}}=\{\text{{\sf g-shell}, {\sf
    g-shell:nm}}\}$
is a product of $\fmslice{c}(\elFM_{\text{\sf g-shell}})$.  However,
$c$ is not a pre-product of
$\elFM_{\text{\sf glibc}}\mathrel\fmplus\elFM_{\text{\sf g-shell}}$,
since the use flag {\sf g-shell:nm} requires a timezone database to be
installed while the use flag {\sf glibc:v} forbids it.
\end{example} 

In this paper we address complete and efficient product discovery in
sets of interdependent feature models.  To this aim, we define a novel
criterion which, given some selected features, enables solving the
product-discovery problem for a set of feature model fragments with
shared features, without composing all the fragments.

\section{Lazy Product Discovery}\label{sec:abstration-composition}\label{sec:product}
We are looking for a product-discovery criterion which works for
interdependent feature models, similar to how the
feature-compatibility criterion given in
Theorem~\ref{the:Feature-Compatibilty-Without-Shared-Features} works
for disjoint feature models.  The solution lies in a novel criterion
based on strengthening the feature model interfaces.  Given feature
models with shared features $\elFM_i=(\elFMF_i,\elFMP_i)$ and a set of
selected features $c$, we need feature model interfaces $\elFM_i'$
that reflect how $c$ is related to other features in $\elFM_i$ in
order to guarantee that the interface behaves similarly to $\elFM_i$
with respect to the feature-compatibility problem for $c$.  More
formally, the interface $\elFM'_i$ must satisfy the following
conditions:
\begin{enumerate}
\item $\fmslice{c}(\elFM_i)\preceq\elFM'_i\;$; and  
\item the products of $\elFM_i'$ are among the products of $\elFM_i$.
\end{enumerate}

\begin{example}[Feature compatibility with shared features continued]\label{ex:interraction}
  Consider feature models $\elFM_{\text{\sf glibc}}$ and
  $\elFM_{\text{\sf g-shell}}$ and configuration $c$, as discussed in
  Example~\ref{ex:failure-of-criterion}.  Let
  $c_1=\{\text{{\sf glibc}, {\sf glibc:v}}\}$ and
  $c_2=\{\text{{\sf glibc}, {\sf tzdata}, {\sf glibc:v}}\}$.  We can
  see that the interface
  $\elFM'_{\text{\sf glibc}} = \fmslice{c_2}(\elFM_{\text{\sf
      glibc}})$ of $\elFM_{\text{\sf glibc}}$ satisfies (with
  $i=\text{\sf glibc}$) conditions (1) and (2) above.  Since
  $\fmslice{c}(\elFM_{\text{\sf
      glibc}})=\fmslice{c_1}(\elFM_{\text{\sf glibc}})$ and
  $c_2\setminus c_1=\{\text{\sf tzdata}\}$, this shows that it is
  important to consider the feature {\sf tzdata} when checking whether
  $c$ is a pre-product of a composed feature model including
  $\elFM_{\text{\sf glibc}}$.
\end{example}

Let us now introduce terminology for different restrictions to the
interface relation that satisfy one or both of the conditions (1) and
(2) given above, and investigate some of their properties.

\begin{definition}[FM \extslice, conservative interface, and \conextslice\ relations]\label{def:conservative-interface}
Given a set of features $Y$ and two feature models $\elFM'=(\elFMF',\elFMP')$ and $\elFM=(\elFMF,\elFMP)$, 
 we say that 
\begin{enumerate}
\item\label{def:cut}
$\elFM'$ is an \emph{\extslice\ for $Y$} of $\elFM$, denoted $\elFM'\mathrel\preceq_Y\elFM$, iff
 $\fmslice{Y}(\elFM)\preceq\elFM'\preceq\elFM$ holds;
\item\label{def:c-i}
$\elFM'$ is a \emph{conservative interface} of $\elFM$, denoted $\elFM'\mathrel\fmcut\elFM$, 
iff    both $\elFM'\mathrel\preceq\elFM$ and 
$\elFMP'\subseteq\elFMP$ hold;  and
\item\label{def:c-cut} $\elFM'$ is a \emph{\conextslice\ for $Y$} of
  $\elFM$, denoted $\elFM'\mathrel\fmcut_Y\elFM$, iff $\elFM'$ is both
  an \extslice for $Y$ and a conservative interface.
\end{enumerate}
\end{definition}
Note that $\fmcut_\emptyset=\fmcut$. The relation $\fmcut$ is a
partial order; the feature model $(\emptyset,\emptyset)$ is the
\emph{minimum} (i.e., the smallest w.r.t.\ both $\preceq$ and
$\fmcut$) conservative interface of every void feature model; and the
empty feature model $\elFM_{\emptyset}$ is the minimum conservative
interface of every feature model that has the empty product.

The following theorem proves, in a constructive way, the existence of
the minimum \conextslice\ of $\elFM$ for $Y$, for any feature model
$\elFM=(\elFMF,\elFMP)$.  Let the \emph{minimal products} of $\elFM$
be the products that are not included in other products, and let
$Y'=(\elFMF\cap Y)$ be the set of features of $\elFM$ that occur in
$Y$.  Intuitively, the \emph{minimum \conextslice} of $\elFM$ for $Y$
is the feature model obtained from $(Y', \emptyset)$ by incrementally
adding all the minimal products of $\elFM$ (and their features) that
contain a feature occurring in the feature model, until a fixed point
is reached.

\begin{theorem}[Characterization of the minimum \conextslice]\label{th:characterization-minimum-cut}
For all sets $Y$ of features  and all feature models  $\elFM=(\elFMF, \elFMP)$, 
let $\fmcutminY(\elFM)$ be the minimum \conextslice\ of $\elFM$ for $Y$, i.e.,
$$\fmcutminY(\elFM) = \min\nolimits_{\fmcut}\{\elFM'\sht\elFM'\mathrel\fmcut_Y\elFM\}.$$
Then $\fmcutminY(\elFM)=f^\infty(((\elFMF\cap Y), \emptyset))$, where
$f$ is the function between feature models defined by
$$f((\elFMF_1,\elFMP_1))=(\elFMF_1\cup(\bigcup\limits_{p\in\elFMP_2}p),\elFMP_1\cup\elFMP_2)$$
with
$\elFMP_2=\{p\in\elFMP\sht \forall p'\in\elFMP,\,(p'\subsetneq
p)\Rightarrow((p\setminus p')\cap\elFMF_1\neq\emptyset)\}$.
\end{theorem}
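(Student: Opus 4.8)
The plan is to reduce the statement to a claim about \emph{feature sets} and then recognise $f^\infty$ as the computation of a least fixed point. First I would give a clean description of the \conextslices\ of $\elFM$ for $Y$. By the remark following Definition~\ref{def:sclice}, every interface of $\elFM$ has the form $\fmslice{\elFMF'}(\elFM)$ and is determined by its feature set $\elFMF'\subseteq\elFMF$. Unfolding Definitions~\ref{def:interface} and~\ref{def:conservative-interface}, $\fmslice{\elFMF'}(\elFM)$ is a \conextslice\ for $Y$ exactly when (i)~$\elFMF\cap Y\subseteq\elFMF'$ (this is the \extslice\ condition $\fmslice{Y}(\elFM)\preceq\elFM'$, using transitivity of slicing) and (ii)~$\elFMF'$ is \emph{saturated}, i.e.\ $p\cap\elFMF'\in\elFMP$ for every $p\in\elFMP$ (this is where conservativity $\elFMP'\subseteq\elFMP$ is used). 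I would then check that, for two \conextslices, $\fmslice{\elFMF_1'}(\elFM)\mathrel\fmcut\fmslice{\elFMF_2'}(\elFM)$ holds iff $\elFMF_1'\subseteq\elFMF_2'$, so that $\fmcut$ restricted to \conextslices\ is just inclusion of feature sets. Finally I would observe that saturated sets containing $\elFMF\cap Y$ are closed under intersection (if $p\cap\elFMF_1'\in\elFMP$ and $p\cap\elFMF_2'\in\elFMP$, applying saturation of $\elFMF_2'$ to the product $p\cap\elFMF_1'$ gives $p\cap\elFMF_1'\cap\elFMF_2'\in\elFMP$); hence a unique minimum such set $\elFMF_{\min}$ exists and $\fmslice{\elFMF_{\min}}(\elFM)$ is the minimum \conextslice. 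The theorem thus reduces to showing that the feature set of $f^\infty(((\elFMF\cap Y),\emptyset))$ equals $\elFMF_{\min}$ and that its product set is the corresponding slice.

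Next I would analyse the iteration. Writing $\elFMP_2(\elFMF_1)$ to stress the dependence on the current feature set, I would rewrite the membership condition as: $p\in\elFMP_2(\elFMF_1)$ iff there is no product $p'$ with $p\cap\elFMF_1\subseteq p'\subsetneq p$. From this it is immediate that $\elFMP_2(\elFMF_1)$ depends only on $\elFMF_1$ and is monotone in it, and that $f$ is inflationary and monotone. Hence iterating $f$ from $((\elFMF\cap Y),\emptyset)$ produces an increasing chain stabilising at a fixed point $(\elFMF_\infty,\elFMP_\infty)$ (for finite feature models in finitely many steps; in general $f^\infty$ is the least fixed point above the start). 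Being a fixed point yields two facts I would record: every $p\in\elFMP_\infty=\elFMP_2(\elFMF_\infty)$ satisfies $p\subseteq\elFMF_\infty$ (since $f$ adjoins the features of these products), and $\elFMF\cap Y\subseteq\elFMF_\infty$.

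For soundness I would prove that $\elFMF_\infty$ is saturated, which is the crux. Given $q\in\elFMP$, I would choose (possible since $q$ is finite) a product $p^*$ minimal among products with $q\cap\elFMF_\infty\subseteq p^*\subseteq q$. A short computation gives $p^*\cap\elFMF_\infty=q\cap\elFMF_\infty$, and then minimality of $p^*$ is exactly the condition $p^*\in\elFMP_2(\elFMF_\infty)$; therefore $p^*\subseteq\elFMF_\infty$, forcing $p^*=q\cap\elFMF_\infty$ and so $q\cap\elFMF_\infty\in\elFMP$. The same computation shows $q\cap\elFMF_\infty=p^*\in\elFMP_\infty$, while conversely each $p\in\elFMP_\infty$ equals $p\cap\elFMF_\infty$; hence $\elFMP_\infty=\{p\cap\elFMF_\infty\sht p\in\elFMP\}$ and $\elFMP_\infty\subseteq\elFMP$, i.e.\ $(\elFMF_\infty,\elFMP_\infty)=\fmslice{\elFMF_\infty}(\elFM)$ is a \conextslice\ for $Y$.

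Finally, for minimality I would show $\elFMF_\infty\subseteq\elFMF'$ for every saturated $\elFMF'\supseteq\elFMF\cap Y$, by induction along the chain $\elFMF_0\subseteq\elFMF_1\subseteq\cdots$. The base case is $\elFMF_0=\elFMF\cap Y\subseteq\elFMF'$. For the step, assuming $\elFMF_k\subseteq\elFMF'$, take any $p\in\elFMP_2(\elFMF_k)$: since $\elFMF'$ is saturated, $p\cap\elFMF'\in\elFMP$ and $p\cap\elFMF_k\subseteq p\cap\elFMF'\subseteq p$, so $p\cap\elFMF'\subsetneq p$ would contradict $p\in\elFMP_2(\elFMF_k)$; hence $p\subseteq\elFMF'$ and $\elFMF_{k+1}\subseteq\elFMF'$. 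Thus $\elFMF_\infty\subseteq\elFMF'$, so $\elFMF_\infty=\elFMF_{\min}$ and $f^\infty(((\elFMF\cap Y),\emptyset))$ is the minimum \conextslice. The main obstacle I anticipate is the soundness step: reconciling the additive definition of $f$ (which inserts whole products $p\in\elFMP$) with the projective description of a slice (whose products are the sets $p\cap\elFMF_\infty$) requires precisely the minimal-product argument above, and arranging that this single argument simultaneously delivers saturation and the exact product set is the delicate part.
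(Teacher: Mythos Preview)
Your argument is correct and lands on the same fixed-point idea as the paper, but the organisation is genuinely different. The paper works in the lattice of pairs $(\elFMF'',\elFMP'')$ and proves a single structural lemma, namely that the fixpoints of $f$ coincide exactly with the \conextslices\ of $\elFM$ for $Y$; minimality then falls out immediately from Kleene's theorem. You instead reduce everything to feature sets by first characterising \conextslices\ as slices $\fmslice{\elFMF'}(\elFM)$ with $\elFMF'$ ``saturated'', and you supply an independent existence proof for the minimum via closure of saturated sets under intersection---a clean observation the paper does not make. For the iteration itself you split the work: soundness (the limit is a \conextslice) via your minimal-product argument, and minimality via a direct induction $\elFMF_k\subseteq\elFMF'$ along the chain, rather than via the bidirectional ``fixpoint~$=$~\conextslice'' identification. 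Your soundness step and the paper's ``fixpoint~$\Rightarrow$~\conextslice'' direction use essentially the same minimal-element trick, though your formulation (pick $p^*$ minimal with $q\cap\elFMF_\infty\subseteq p^*\subseteq q$) is tidier than the paper's two-step choice of $p_1\in M$ minimal and $p_2\in N$ maximal. What the paper's route buys is the stronger intermediate statement (every fixpoint is a \conextslice, not just the least one); what your route buys is a more elementary and self-contained argument that avoids invoking Kleene and makes the role of the feature set explicit throughout.
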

\begin{proof}
Let $\elFM'=((\elFMF\cap Y), \emptyset)$ and consider the partially ordered set of feature models $(S,\leq)$, defined by
\begin{itemize}
\item[-]
$S = \{(\elFMF'',\elFMP'')\sht (\elFMF\cap Y)\subseteq\elFMF''\subseteq\elFMF \land \elFMP''\subseteq\elFMP\}$, and
\item[-]
$(\elFMF_1,\elFMP_1)\leq(\elFMF_2,\elFMP_2) \quad \text{iff}\quad  (\elFMF_1\subseteq\elFMF_2) \ \text{and}\ (\elFMP_1\subseteq\elFMP_2)$.
\end{itemize}
It is straightforward to see that $(S,\leq)$ is a complete lattice (with minimum $\elFM'$ and maximum $\elFM$) and that $f$ is monotonic increasing for $\leq$.
Hence, by~\cite{kleene:1938}, $f^\infty(\elFM')$ exists and is the minimum fixpoint of $f$.

We prove that the fixpoints of $f$ are exactly the \conextslices\ of
$\elFM$ for $Y$.  Let us first consider a feature model
$\elFM_Y=(\elFMF_Y,\elFMP_Y)$ that is a \conextslice\ of $\elFM$ for
$Y$.  Since $\elFM_Y=\fmslice{\elFMF_Y}(\elFM)$ and
$\elFMP_Y\subseteq\elFMP$ for all $p\in\elFMP$, we have
$p\cap\elFMF_Y\in\elFMP$.  This implies that for any
$p\in\elFMP\setminus\elFMP_Y$, there exists $p'\in\elFMP$ with
$p'\subsetneq p$ such that $(p\setminus p')\cap\elFMF_Y=\emptyset$.
By definition, we have
$f(\elFM_Y)=(\bigcup_{p\in\elFMP_2}p\cup\elFMF_Y,\elFMP_Y\cup\elFMP_2)$
with
$$\elFMP_2 = \{p\in\elFMP\sht\forall p'\in\elFMP,\,(p'\subsetneq p)\Rightarrow((p\setminus p')\cap\elFMF_Y\neq\emptyset)\} \subseteq \elFMP_Y.$$
Hence $f(\elFM_Y)=\elFM_Y$.

Let us now consider a feature model $\elFM_Y'=(\elFMF_Y',\elFMP_Y')$
in $S$ such that $f(\elFM_Y')=\elFM_Y'$.  First, it is clear by
construction that $\elFMP_Y'\subseteq\elFMP$.  Moreover, if we write
$\elFMP'=\{p\in\elFMP\sht\forall p'\in\elFMP\setminus\{p\},
p'\nsubseteq p\}$, it is clear from the definition of $f$ that
$\elFMP'\subseteq \elFMP_Y'$.  Suppose that the set
$M=\{p\in\elFMP\sht p\cap\elFMF_Y'\not\in\elFMP_Y'\}$ is not empty and
consider $p_1$ a minimal element of $M$ w.r.t. $\subseteq$.  Since
$p_1\nsubseteq\elFMF_Y'$, by definition of $\elFMP'$, the set
$N=\{p'\in\elFMP_Y'\sht p'\subseteq p_1\}$ is not empty. Consider any
maximal element $p_2$ of $N$ w.r.t. $\subseteq$.  Since
$p_1\cap\elFMF_Y'\not\in\elFMP_Y'$, we have
$(p_1\setminus p_2)\cap\elFMF_Y'\neq\emptyset$, and so the condition
$\forall p'\in\elFMP,\,(p'\subsetneq p_1)\Rightarrow((p_1\setminus
p')\cap\elFMF_Y'\neq\emptyset)$ holds.  It follows that $\elFM_Y'$ is
not a fixpoint of $f$ (since applying $f$ to $\elFM_Y'$ would add the
product $p_1$), which contradicts the hypothesis.  Hence for all
$p\in\elFMP$, $p\cap\elFMF_Y'\in\elFMP_Y'$, this means that
$\elFM_Y'=\fmslice{\elFMF_Y'}(\elFM)$.  Since by construction
$Y\cap\elFMF\subseteq\elFMF_Y'$, we have
$\fmslice{Y}(\elFM)\preceq\elFM_Y'\preceq\elFM$: $\elFM_Y'$ is a
\conextslice\ of $\elFM$ for $Y$.

To conclude, observe that the orders $\preceq$ and $\leq$ are equal on
the set of \conextslices\ of $\elFM$ for $Y$.  Since $f(\elFM')$ is
the minimum fixpoint of $f$ w.r.t. $\leq$, it is also the minimum
\conextslice\ of $\elFM$ for $Y$.
\end{proof}

\begin{example}[A minimum \conextslice\ \doneByFD{of } \text{\sf glibc} FM]
\label{ex:minimal_concut}
Consider the feature model $\elFM_{\text{\sf glibc}}$ of
Example~\ref{example-fm} and
$Y=\{\text{{\sf glibc}, {\sf glibc:doc}}\}$.  The minimal \conextslice\
$\fmcutminY(\elFM_{\text{\sf glibc}})$ can be computed by starting
with the feature model $(Y, \emptyset)$ and then applying $f$. In the
first application of $f$, the set $\elFMP_2$ collects the products
$\emptyset$, $\{ \text{{\sf glibc}} \}$,
$\{ \text{{\sf glibc:doc}} \}$, and
$\{ \text{{\sf glibc},{\sf glibc:doc},{\sf txinfo}} \}$. The set
$\elFMF_1$ after the first application becomes
$\{ \text{{\sf glibc},{\sf glibc:doc},{\sf txinfo}} \}$ and therefore,
in the second application of $f$, the products
$\{ \text{{\sf txinfo}} \}$, $\{ \text{{\sf glibc}, {\sf txinfo}}\}$,
and $\{ \text{{\sf glibc:doc}, {\sf txinfo}}\}$ are added to
$\elFMP_2$. At this point, further applications of $f$ do not add
further products.

In this case, the minimum \conextslice\
$\fmcutminY(\elFM_{\text{\sf glibc}})$ is different from the slice
$\fmslice{Y}(\elFM_{\text{\sf glibc}})$, since the \conextslice\ keeps the
information that when $\text{\sf glibc}$ and $\text{{\sf glibc:doc}}$
are selected, then $\text{{\sf txinfo}}$ also has to be selected.
\end{example}

The following theorem proves sufficient criteria to guarantee that a
product of the composition of \conextslices\ is also a product of the
composition of the original feature models and, conversely, that the
original feature model does not have a product that contains a given
set of features.  Intuitively, given a set of features $Y$ and a
product $p$ of the composition of \conextslices\ for $Y$, if $p$
is a subset of $Y$ we have that $p$ is also a product of the
composition of the original feature models.  Moreover, if the
composition of \conextslices\ for $Y$ has no products with the
features in a set $c\subseteq Y$, then neither does the the original
feature model.

\begin{theorem}[Product-discovery criterion for interdependent FMs]\label{th:valid-product}
  Consider a set $Y$ of features, a finite set $I$ of indices, and two
  sets of feature models $\{\elFM_i=(\elFMF_i,\elFMP_i)\sht i\in I\}$
  and $\{\elFM_i'=(\elFMF_i',\elFMP_i')\sht i\in I\}$ such that for
  all $i\in I$, $\elFM_i'\mathrel\fmcut_Y\elFM_i$.  Let
  $\elFM=(\elFMF,\elFMP)=\bigfmplus_{i\in I}\elFM_i$ and
  $\elFM'=(\elFMF',\elFMP')=\bigfmplus_{i\in I}\elFM_i'$. Then
\begin{enumerate}
\item
each product $p$ of $\elFM'$ such that $p\subseteq Y$  is a product of $\elFM$, and
\item for each set of features $c\subseteq Y$ and for each product $p$
  of $\elFM$ such that $c\subseteq p$, there exists a product $q$ of
  $\elFM'$ such that $c\subseteq q\subseteq p$.
\end{enumerate}
\end{theorem}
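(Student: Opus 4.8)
The plan is to unpack the definition of a \conextslice: for each $i$, the relation $\elFM_i'\mathrel\fmcut_Y\elFM_i$ supplies two ingredients. The conservative-interface part gives $\elFM_i'\preceq\elFM_i$ (hence $\elFMF_i'\subseteq\elFMF_i$ and $\elFMP_i'=\{r\cap\elFMF_i'\sht r\in\elFMP_i\}$) together with $\elFMP_i'\subseteq\elFMP_i$, and the \extslice\ part gives $\fmslice{Y}(\elFM_i)\preceq\elFM_i'$, whose only consequence I will actually need is the feature inclusion $\elFMF_i\cap Y\subseteq\elFMF_i'$. The whole argument rests on combining these two facts with the explicit form of composition from Definition~\ref{def:composition}: a configuration is a product of $\bigfmplus_{i\in I}\elFM_i$ exactly when it can be written as $\bigcup_{i\in I}p_i$ with $p_i\in\elFMP_i$ and the $p_i$ pairwise agreeing on shared features, i.e.\ $p_i\cap\elFMF_j=p_j\cap\elFMF_i$ for all $i,j\in I$.

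For part (1), I would start from a product $p=\bigcup_{i\in I}p_i'$ of $\elFM'$ with $p\subseteq Y$, where each $p_i'\in\elFMP_i'$ and the $p_i'$ agree on shared features of the interfaces, i.e.\ $p_i'\cap\elFMF_j'=p_j'\cap\elFMF_i'$. Since $\elFMP_i'\subseteq\elFMP_i$, each $p_i'$ is already a product of $\elFM_i$, so it only remains to promote the interface-level agreement to full-model agreement. The key observation is that $p_i'\subseteq\elFMF_i'$ and $p_i'\subseteq p\subseteq Y$, hence $p_i'\cap\elFMF_j\subseteq\elFMF_j\cap Y\subseteq\elFMF_j'$; this forces $p_i'\cap\elFMF_j=p_i'\cap\elFMF_j'$, and symmetrically $p_j'\cap\elFMF_i=p_j'\cap\elFMF_i'$. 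Chaining these with the interface agreement yields $p_i'\cap\elFMF_j=p_j'\cap\elFMF_i$, which is precisely the condition making $p=\bigcup_{i\in I}p_i'$ a product of $\elFM$.

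For part (2), I would start from a product $p=\bigcup_{i\in I}p_i$ of $\elFM$ with $c\subseteq p$ and $c\subseteq Y$, and take as candidate $q=\bigcup_{i\in I}q_i$ with $q_i=p_i\cap\elFMF_i'$. Because $\elFM_i'\preceq\elFM_i$ describes $\elFMP_i'$ exactly as the set of such restrictions, each $q_i\in\elFMP_i'$; a short computation using $\elFMF_i'\subseteq\elFMF_i$ and the agreement $p_i\cap\elFMF_j=p_j\cap\elFMF_i$ shows $q_i\cap\elFMF_j'=q_j\cap\elFMF_i'$, so $q$ is a product of $\elFM'$, and $q\subseteq p$ is immediate since $q_i\subseteq p_i$. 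The containment $c\subseteq q$ is where the \extslice\ inclusion is used again: any $x\in c$ lies in some $p_i$, so $x\in\elFMF_i$, and since $x\in c\subseteq Y$ we get $x\in\elFMF_i\cap Y\subseteq\elFMF_i'$, whence $x\in p_i\cap\elFMF_i'=q_i\subseteq q$.

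I expect the main obstacle to be exactly the bridge between the two notions of agreement-on-shared-features: the composed \conextslices\ only guarantee agreement over the interface feature sets $\elFMF_i'$, whereas the conclusions are phrased over the larger sets $\elFMF_i$. The resolution is uniform in both directions and hinges on the single fact $\elFMF_i\cap Y\subseteq\elFMF_i'$ supplied by the \extslice\ requirement, together with the respective hypotheses $p\subseteq Y$ (part 1) and $c\subseteq Y$ (part 2); invoking this inclusion at the right place, and checking that no appeal is ever made to features lying outside $Y$, is the one delicate point of the bookkeeping.
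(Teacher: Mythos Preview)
Your proposal is correct and follows essentially the same approach as the paper's proof: for part~(1) you use $\elFMP_i'\subseteq\elFMP_i$ plus the inclusion $\elFMF_i\cap Y\subseteq\elFMF_i'$ (together with $p\subseteq Y$) to upgrade interface-level agreement to full-model agreement, and for part~(2) you restrict each $p_i$ to $\elFMF_i'$ and verify the resulting pieces form a product of $\elFM'$ sandwiched between $c$ and $p$. The only cosmetic difference is that the paper introduces auxiliary sets $Y_i\supseteq Y$ with $\fmslice{Y_i}(\elFM_i)=\elFM_i'$ and writes $q_i=p_i\cap Y_i$, but since $p_i\subseteq\elFMF_i$ this coincides with your $p_i\cap\elFMF_i'$.
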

\begin{proof}
(1) Consider a product $p\in\elFMP'$. By construction, for every $i\in I$, there exists $p_i\in\elFMP_i'$ such that $p=\bigcup_{i\in I}p_i$ 
and, for all $i,j\in I$, $p_j\cap\elFMF_i'=p_i\cap\elFMF_i'$.
By Definition~\ref{def:conservative-interface}, for all $i\in I$, since $p_i\in\elFMP_i'$, we have that $p_i\in\elFMP_i$.
Let us now consider $i,j\in I$. We have that
$p_i\cap\elFMF_j = p_i\cap Y\cap\elFMF_j = p_i\cap\elFMF_j'=p_j\cap\elFMF_i' = p_j\cap Y\cap\elFMF_i = p_j\cap\elFMF_i$. 
Hence, $p=\bigcup_{i\in I}p_i\in\elFMP$.

(2)  By Definition~\ref{def:conservative-interface}, since
  $\elFM_i'\mathrel\preceq_Y\elFM_i$, we have
  $\fmslice{Y}(\elFM_i)\preceq\elFM_i'$. Then, for all $i\in I$, there
  exists $Y_i$ such that $c\subseteq Y\subseteq Y_i$ and
  $\fmslice{Y_i}(\elFM_i)=\elFM_i'$.  Consider a product $p\in\elFMP$
  such that $c\subseteq p$. By definition, for all $i\in I$, there
  exists $p_i\in\elFMP_i$ such that $p=\bigcup_{i\in I}p_i$ and for
  all $i,j\in I$, we have $p_i\cap\elFMF_j=p_j\cap\elFMF_i$.  Let
  $q=\bigcup_{i\in I}(p_i\cap Y_i)$. Clearly
  $c\subseteq q \subseteq p$. Moreover, consider $i,j\in I$; since
  $p_i\cap\elFMF_j=p_j\cap\elFMF_i$ holds, we have:
$(p_i\cap Y_i)\cap(\elFMF_j\cap Y_j) = (p_i\cap\elFMF_j)\cap(Y_i\cap Y_j) 
  = (p_j\cap\elFMF_i)\cap(Y_i\cap Y_j) = (p_j\cap Y_j)\cap(\elFMF_i\cap Y_i)$.
Hence $q\in\elFMP'$.
\end{proof}

\begin{example}[Using the product-discovery criterion with  $\text{\sf glibc}$ and $\text{\sf g-shell}$ FMs]
  Consider the packages $\text{\sf glibc}$ and $\text{\sf g-shell}$ of
  Example \ref{example-composition} and the
  set~$Y=\{\text{{\sf glibc}, {\sf glibc:v}, {\sf tzdata}}\}$. It is
  easy to see that the minimum \conextslice\ of
  $\elFM_{\text{\sf glibc}}$ for $Y$ is
  $\fmcutminY(\elFM_{\text{\sf glibc}}) = (Y, 2^Y \setminus Y)$
  because \text{\sf tzdata} can not be selected when \text{\sf glibc}
  and \text{\sf glibc:v} are selected.  Now consider the package
  $\text{\sf g-shell}$ instead. The minimum \conextslice\ of
  $\elFM_{\text{\sf g-shell}}$ for $Y$ is
  $\fmcutminY(\elFM_{\text{\sf g-shell}}) = (Y, 2^Y)$.
By the definition of feature model composition, we have that
$\fmcutminY(\elFM_{\text{\sf glibc}})
\bigfmplus \fmcutminY(\elFM_{\text{\sf 
g-shell}})$ is the same as
$\fmcutminY(\elFM_{\text{\sf glibc}})$.

Now, due to Theorem~\ref{th:valid-product}, we can for example derive
that the product $\{\text{{\sf glibc}, {\sf tzdata}}\}$ that contains
the shared feature \text{\sf tzdata} is also a product of the
composition of $\elFM_{\text{\sf glibc}}$ and
$\elFM_{\text{\sf g-shell}}$. Note that to discover this fact, we
avoided computing the composition of the entire feature models and
could ignore, e.g., features such as \text{\sf glibc:doc} and
\text{\sf g-shell}.
\end{example}

The criteria provided by Theorem~\ref{th:valid-product} allow us to
prove that the lazy product-discovery algorithm
(Listing~\ref{lst:algo} in Section~\ref{sec:motivation}) is correct
and complete.

\begin{theorem}[Soundness and completeness of lazy product
  discovery]\label{thm:lpd}
  Given a finite set $I$ of indices, a set of feature models
  $S=\{\elFM_i=(\elFMF_i,\elFMP_i)\sht i\in I\}$ such that all
  products of $\elFM_i$ are finite, and a finite configuration $c$,
  the lazy product-discovery algorithm (Listing~\ref{lst:algo})
  applied to $S$ and $c$ always finishes and returns a product of
  $\bigfmplus_{i\in I}\elFM_i$ that contains $c$ if and only if such a
  product exists.
\end{theorem}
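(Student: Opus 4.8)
The plan is to derive the theorem from the four loop invariants \textbf{Inv1}--\textbf{Inv4} stated in Section~\ref{sec:motivation}, which I would show hold at Line~\ref{line:while} on every pass, and then to read off soundness and completeness from a case analysis of how the \texttt{while}-loop exits. Since \textbf{Inv1} and \textbf{Inv2} are already argued informally there (the variable \(Y\) is initialised to \(c\) and only grows, and \texttt{select} meets its specification), the real content is to verify \textbf{Inv3} and \textbf{Inv4}, which is exactly where Theorem~\ref{th:valid-product} enters, together with a termination argument.

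For \textbf{Inv3} and \textbf{Inv4} I would instantiate Theorem~\ref{th:valid-product} with the current value of \(Y\) and the family of cuts \(\elFM_i' = \texttt{pick\_cut}(\elFM_i, Y)\). By the specification of \texttt{pick\_cut}, each \(\elFM_i'\) satisfies \(\elFM_i' \fmcut_Y \elFM_i\), so setting \(\mathcal{M}_0 = \bigfmplus_{i\in I}\elFM_i\) and \(\mathcal{M}' = \bigfmplus_{i\in I}\elFM_i' = \texttt{compose}(\{\texttt{pick\_cut}(\elFM_i, Y)\mid i\in I\})\) meets the hypotheses of the theorem. Then \textbf{Inv3} is precisely part~(1): if \textsf{solution} is a product of \(\mathcal{M}'\) contained in \(Y\), part~(1) says it is also a product of \(\mathcal{M}_0\). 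For \textbf{Inv4} I would apply part~(2) contrapositively: if \(\mathcal{M}_0\) had a product \(p\) with \(c\subseteq p\), then since \(c\subseteq Y\) by \textbf{Inv1}, part~(2) yields a product \(q\) of \(\mathcal{M}'\) with \(c\subseteq q\subseteq p\), contradicting the assumption that \(\mathcal{M}'\) has no product including \(c\). Note that \textbf{Inv3} and \textbf{Inv4} need not be propagated inductively in the usual sense: they hold on each pass because they are derived afresh from Theorem~\ref{th:valid-product} applied to the current \(\mathcal{M}'\), using only \textbf{Inv1} as a genuine invariant.

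For termination I would observe that the hypothesis that all products of \(\elFM_i\) are finite makes every value of \textsf{solution} finite: the products of a cut are among the products of \(\elFM_i\) (the conservative-interface condition gives \(\elFMP_i'\subseteq\elFMP_i\)), and a product of the composition \(\mathcal{M}'\) is a finite union, over the finite index set \(I\), of such finite products. At each iteration the loop guard ensures \textsf{solution} is not already contained in \(Y\), so the size of \(Y\) strictly increases, while \(Y\) remains within the finite set of features occurring in the models of \(S\); hence the \texttt{while}-loop runs only finitely often.

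Finally I would assemble these facts. When the loop exits, either \textsf{solution} is \texttt{None}, or it is not \texttt{None} and is contained in \(Y\). In the first case \textbf{Inv2} gives that \(\mathcal{M}'\) has no product containing \(c\), and \textbf{Inv4} then shows \(\mathcal{M}_0\) has none either, so returning \texttt{None} is correct; in particular, by contraposition, whenever a product of \(\mathcal{M}_0\) containing \(c\) exists the exit cannot be via \texttt{None}. In the second case \textbf{Inv2} gives that \textsf{solution} is a product of \(\mathcal{M}'\) containing \(c\), and \textbf{Inv3} upgrades it to a product of \(\mathcal{M}_0\) containing \(c\). Together these two cases yield the claimed equivalence: the algorithm returns a product of \(\bigfmplus_{i\in I}\elFM_i\) containing \(c\) exactly when such a product exists. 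I expect the main obstacle to be the termination bound---making precise that the features pulled into \(Y\) stay within a finite universe---and, secondarily, orienting parts~(1) and~(2) of Theorem~\ref{th:valid-product} so that they match \textbf{Inv3} and \textbf{Inv4}, respectively.
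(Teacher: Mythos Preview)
Your proposal is correct and follows essentially the same approach as the paper: you rely on \textbf{Inv1}--\textbf{Inv4}, derive \textbf{Inv3} and \textbf{Inv4} from parts~(1) and~(2) of Theorem~\ref{th:valid-product} respectively (the paper does this in one sentence, while you spell out the contrapositive for \textbf{Inv4}), and argue termination via the strict growth of \(Y\) within a finite bound. The only minor refinement is that the paper's explicit termination bound is \(\bigl(\bigcup_{i\in I}\bigcup_{p\in\elFMP_i} p\bigr)\cup c\), which is slightly more precise than your ``finite set of features occurring in the models of \(S\)'' since it accommodates features in \(c\) that may lie outside the \(\elFMF_i\).
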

\begin{proof}
  Recall the definitions of auxiliary functions
  (Section~\ref{sec:motivation}):
\begin{enumerate}
\item
$\texttt{pick\_cut}(\elFM,Y) =\ \elFM'$
  for some $\elFM'$ s.t.\ $\elFM' \mathrel\fmcut_Y \elFM$,
\item
  $\texttt{compose}(\{\elFM_1, \dots, \elFM_n \})$ $=$
  $\elFM_1\mathrel\fmplus\cdots\mathrel\fmplus\elFM_n$,
\item $\texttt{select}(\elFM,c)$ is a product of $\elFM$
  containing all the features in $c$ if such a product exists, \lstinline[language=Algo]|None|
  otherwise;
\end{enumerate}
and the loop invariants \textbf{Inv1}--\textbf{Inv4} on
Line~\ref{line:while}.  In Section~\ref{sec:motivation} we have 
already shown that the invariants \textbf{Inv1} and \textbf{Inv2}
hold, and that the algorithm always finishes (because the set of
examined features $Y$, which strictly increases during each traversal
of the \lstinline[language=Algo]|while| loop, is bounded by
$(\bigcup_{i\in I}\bigcup_{p\in\elFMP_i}p)\cup c$, which is finite by
hypothesis). We can now conclude the proof by observing that the
invariants \textbf{Inv3} and \textbf{Inv4} follow straightforwardly
from Theorem~\ref{th:valid-product}(1) and
Theorem~\ref{th:valid-product}(2), respectively.
\end{proof}

It is worth observing that a suitable structure of the feature models
can enable a particular efficient implementation of the function
$\texttt{pick\_cut}(\elFM,Y)$.  For instance, if the
feature-model $\elFM$ is propositionally represented with a pair of
the form $(\elFMF, f\rightarrow\psi)$ (for some set of features
$\elFMF$, feature $f\in\elFMF$ and formula $\psi$) then, whenever
$f\not\in Y$, $\texttt{pick\_cut}(\elFM,Y)$ can return the feature
model $(Y',2^{Y'})$ with $Y'=Y\cap\elFMF$, which corresponds to the
pair $(Y',\true)$ in propositional representation.  Therefore, feature
models of the form $(\elFMF, f\rightarrow\psi)$ such that $f\not\in Y$
can be filtered away before computing the composition
$\texttt{compose}(\{\texttt{pick\_cut}(\elFM,Y) | \elFM \in S\})$
in Lines~\ref{line:fp-start} and~\ref{line:update-cut} of the
algorithm.

\section{Evaluation}\label{sec:practice}

With lazy product discovery, we aim to efficiently address the
product-discovery problem in huge configuration spaces, consisting of
hundreds of thousands of features in tens of thousands of feature
models.  Therefore, we evaluate the performance of the lazy
product-discovery algorithm introduced in
Section~\ref{sec:motivation}.  The proposed algorithm loads feature
model fragments by need to examine specific features. A feature is
\emph{loaded} during a configuration process if it occurs in one of
the loaded feature model fragments.  In contrast, \emph{standard}
product-discovery algorithms
(e.g.,~\cite{THUM201470,hyvar-rec,Mendonca:2009:SSP}) load all the
feature models before the product-discovery process starts.

We compare the number of loaded features, the time, and the memory
needed to solve a product-discovery problem using a lazy and a
standard product-discovery algorithm.  In detail, we investigate the
following research questions:

\begin{RQ}\label{rq:gentoo:size}\em
  How is the number of loaded features affected by the choice of a
  lazy or a standard product-discovery algorithm?
\end{RQ}

\begin{RQ}\label{rq:gentoo:time}\em
  How are the speed and memory consumption of product discovery
  affected by the choice of a lazy or a standard product-discovery
  algorithm?
\end{RQ}

In industrial practice, product-discovery tools are often optimized
for efficiency at the expense of \emph{completeness}. As a
consequence, there may be product-discovery problems for which
solutions exist but no solution is found by the tool.  We compare the
lazy product-discovery algorithm to one such state-of-the-art tool by
looking at the percentage of cases in which no product is found by the
state-of-the-art tool (although products exists), and at the
difference in performance for cases when the state-of-the-art
product-discovery tool return a correct answer (that is, it either
discovers a product or fails when there are no products).  For this
purpose, we investigate the following research questions:

\begin{RQ}\label{rq:gentoo:completeness}\em
  How often does a state-of-the-art product-discovery tool fail
  because of its incompleteness (i.e., the tool does not discover any
  product, although there is at least one product)?
\end{RQ}

\begin{RQ}\label{rq:gentoo:time-vs-emerge}\em
  Is lazy product discovery a feasible alternative to state-of-the-art
  product-discovery tools in terms of execution time and memory
  consumption?
\end{RQ}

\subsection{Experimental Design and Subject}\label{sec:ExpDesign}

To answer these research questions, we performed experiments on an
industrial system with a huge configuration space. We chose Gentoo, a
source-based Linux distribution with highly-configurable
packages~\cite{gentoo}, which is among the largest fragmented
feature models studied in the literature~\cite{lienhardtDDP18}.  The
experiments were performed on the March 1st 2019 version of the
distribution, that contained 36197 feature models with 671617 features
overall.

There are no standard benchmarks for product reconfiguration requests.
Therefore, we constructed a set of 1000 product-discovery problems for
the evaluation.  The problems were generated by randomly selected a
set of features (between one and ten) such that each of these features
requires the installation of a different package.  Solving a
product-discovery problem $c$ in this context amounts to computing a
Gentoo product that includes any version of the packages associated to
the features in $c$ and of other packages such that that all
dependencies are fulfilled.

We implemented the algorithm of Listing~\ref{lst:algo} as a tool.
This tool, called \pdepa, targets Gentoo's package dependencies, which
are defined using an ad-hoc syntax~\cite{PMS}. As shown in
Example~\ref{example-fm-pf}, Gentoo's dependencies can be encoded into
feature models where features represent both packages and
configuration options (called \emph{use flags} in Gentoo).  \pdepa
parses a package dependency and generates the equivalent propositional
formula representing the package feature model. A particularity of
Gentoo is that the feature model of a package $f$ can be translated
into a propositional representation of the form
$(\elFMF, f\rightarrow\psi)$, where a package selection feature $f$
represents the package $f$.  The \pdepa tool exploits this structure
of the feature model in the implementation of the key functions
$\texttt{pick\_cut}$ and $\texttt{compose}$ by using the optimization
discussed at the end of Section~\ref{sec:product}.  Specifically,
\pdepa can avoid loading the feature models of packages whose package
selection feature is not in the set $Y$ of required features, when
composing \conextslices (Listing~\ref{lst:algo}, Lines 4 and 8).

As its solving engine, \pdepa uses the state-of-the-art SMT solver
Z3~\cite{z3}, known for its performance and expressivity.  Solvers
such as Z3 allow constraints to be added incrementally, reusing part
of the search done previously without always restarting the search
from scratch.  This is extremely useful for composing \conextslices\
(Listing~\ref{lst:algo}, Lines 4 and 8) since the existing constraints
can be reused, only adding incrementally the new constraints not
implied by the existing ones.  Although this does not formally reduce
the complexity of the algorithm, which is NP-hard in the worst
case,\footnote{The NP-hardness derives immediately from the
  NP-hardness of the problem of finding a valid model for a
  propositional formula.} in practice these optimizations enable a
significant speed-up.

To investigate the research question \emph{RQ~\ref{rq:gentoo:time}},
we need to compare \pdepa to a standard product-discovery algorithm.
Unfortunately, there is no off-the-shelf complete product-discovery
tool for Gentoo and therefore we implemented one to establish a
baseline for our experiments. We constructed a software that loads all
the feature models of all the Gentoo packages and then, as done by
\pdepa, calls the SMT solver Z3~\cite{z3} to solve the configuration
problem. We then compared the results of \pdepa to the corresponding
results of this \emph{baseline tool} (\emph{baseline} for short) in
terms of computation time and memory consumption. To ensure a fair
comparison, we employ a white-box evaluation, and both \pdepa and the
baseline use the same implementation for translating the Gentoo
dependencies and for loading the feature models.

For research questions \emph{RQ~\ref{rq:gentoo:completeness}} and
\emph{RQ~\ref{rq:gentoo:time-vs-emerge}}, we compare the results of
\pdepa to the corresponding results of optimized, heuristics-based
product-discovery with \textsf{emerge}, the command-line interface to
Gentoo's official package manager and distribution system
\textsf{Portage}, which is not complete (i.e., it fails to solve some
product-discovery problems that have solutions).

All experiments were performed on virtual machines provided by the
IaaS OpenStack cloud of the University of
Oslo.\footnote{\href{https://www.uio.no/english/services/it/hosting/iaas/}{https://www.uio.no/english/services/it/hosting/iaas/}}
Every virtual machine had 8 GB of RAM, 2 vCSPUs (2.5 GHz Intel Haswell
processors), and was running an Ubuntu 19.04 operating system. The
Gentoo operating system was virtualized by running Docker and the
image used for the experiments is publicly
available.\footnote{\href{https://hub.docker.com/r/gzoumix/pdepa}{https://hub.docker.com/r/gzoumix/pdepa}}

\subsection{Results and Discussion}\label{sec:ResultsAndDiscussions}
This section is organized according to research questions RQ1--RQ4.
To facilitate the discussion of the experiments, the figures
presenting the different results use a fixed ordering of the 1000
product-discovery problems we considered along the $x$-axis; this
ordering is determined by the number of features loaded by \pdepa
during its computation for a given problem. Each of the 1000
experiments was repeated 5 times for \pdepa, for emerge and for the
baseline; Figures~\ref{fig:rq1}--\ref{fig:rq2-mem} report
the mean values for each experiment.

\paragraph{\textbf{RQ~\ref{rq:gentoo:size}.}}
Figure~\ref{fig:rq1} shows the results of the experiments for research
question RQ~\ref{rq:gentoo:size} and reports on the number of features
loaded by \pdepa to solve each product-discovery problem.  To
highlight how lazy product discovery performs compared to standard
product discovery, which needs to load all features before the
analysis can start, these numbers are shown as the percentage of
features from the full feature model, for each of the product
discovery problems.  The product-discovery problems have been sorted
along the $x$-axis according to this percentage. The figure shows the
loaded features as a full line, the mean number for all the product
discovery problems as a dashed line, and the standard deviation
(abbreviated to SD in the figures) as a the bar. We see that for the
considered product-discovery problems, the mean number of loaded
features is only 1.53\% of the overall number of features. In summary,
the gain in loaded features when solving each of the considered 1000
product-discovery problems using lazy product discovery over standard
product discovery is significant.

\paragraph{\textbf{RQ~\ref{rq:gentoo:time}.}} For research question
RQ~\ref{rq:gentoo:time}, we compared the speed and memory consumption
of product discovery when using \pdepa and the baseline  on the
defined product-discovery problems.
For each problem, \pdepa loads parts of the FM and calls Z3
incrementally (until a valid product for the whole FM is found), while
the baseline first loads the whole FM and then calls Z3.

Figure~\ref{fig:rq4} shows the computation time for product discovery
using \pdepa (green line) and Figure~\ref{fig:rq2} shows the
computation time for product discovery using the baseline.  The mean
execution time for the baseline is 949 seconds, compared to 78 seconds
for \pdepa.  The minimum and maximum execution times of the baseline
are 861.9 and 1222.6 seconds, respectively.  The standard deviation
for the baseline is negligible (around 35 seconds).  It is worth
mentioning that about one third of the execution time is devoted to
loading the overall feature model, while the remaining time is taken
by Z3.  The minimum and maximum execution time of \pdepa are 1.7 and
155.22 seconds, respectively.  The standard deviation is lower than
the one for the baseline, about 18 seconds.  The maximum computation
time of \pdepa is less than one third of the computation time used by
the baseline to simply load the overall feature model, and it is about
the 16\% of the minimum execution time of the baseline.

Figure~\ref{fig:rq4-mem} shows the memory consumption for product
discovery using \pdepa (green line) and Figure~\ref{fig:rq2-mem} shows
the memory consumption for the baseline.  The mean memory consumption
for the baseline is 3,919.4 MB, compared to 400.715 MB for \pdepa.
The minimum and maximum memory consumption of the baseline are 3016
and 3980 MB, respectively. About 1 GB of the used memory here is for
the feature model itself.  The standard deviation for the baseline is
negligible (about 70.84 MB).  The 7 memory consumption values that
fall outside the standard deviation correspond to the product
discovery problems that have no solution.  The minimum and maximum
memory consumption of \pdepa are 73 and 620 MB, respectively.  The
standard deviation, 67.38 MB, is about the same as for the baseline.
The maximum memory consumption of \pdepa is about 19.62\% of the
minimum memory consumption of the baseline.

The experiments show a clear correlation between the time and the
memory taken by \pdepa to solve a product-discovery problem and the
number of features loaded by \pdepa (cf.\ Figure~\ref{fig:rq1}).

In summary, the experiments clearly demonstrate that lazy product
discovery allows significant speed-up and significant reduction
of memory consumption, compared to standard product discovery.

\xdefinecolor{darkgreen}{RGB}{175, 193, 36}

\pgfplotsset{
  legend style={
    legend cell align=left,
    nodes={scale=0.8, transform shape},
    column sep=.5em,
    anchor=north west
  },
  zone/.style={    /pgfplots/legend image code/.code={      \path[##1] (0cm,-0.1cm) rectangle (0.6cm,0.1cm);
    }
  },
  scaled ticks=false,
  yticklabel style={
    /pgf/number format/precision=3,
    /pgf/number format/fixed,
    /pgf/number format/fixed zerofill,
  },
  tickwidth=.1cm,
  tick align=outside,
  every tick label/.append style={font=\scriptsize}
}

\begin{figure}
\xdefinecolor{data}{RGB}{180, 130, 30}
\colorlet{mean}{black}
\def\min{0.020994}
\def\max{2.668336}
\def\mean{1.534100}
\def\derivation{0.303152}
\def\derivlow{1.230948}\def\derivhigh{1.837252}\begin{tikzpicture}
  \begin{axis}[
   legend style={
     at={(0.99,0.08)},
     legend columns=2,
     anchor=south east
   },
  width=.5\textwidth,
  height=4cm,
   xmin=0, xmax=1000,
   axis x line=bottom,
   xticklabel style={},
   ymin=0,
   y axis line style={},
   ytick={\min, \derivlow, \mean,   \derivhigh, \max},
  axis y line=left,
]

  \draw[opacity=0.2, fill=data, draw=data] (0,{\mean - \derivation}) rectangle (1000,{\mean + \derivation});
  \addplot[draw=data] table[x expr=\coordindex+1,y index=1,header=false] {data/pdepa_load.dat};
  \addlegendentry{loaded features}
  \draw[draw=mean, dashed] (0,\mean) -- (1000,\mean);
    \addlegendimage{zone,draw=none,fill=data,fill opacity=0.2}\addlegendentry{SD}
  \addlegendimage{no markers, mean, dashed}\addlegendentry{mean}
\end{axis}
\node[text width=10mm, anchor=east] at (0.42,2.7) {\emph{\footnotesize Percentage}};
\end{tikzpicture}
\vspace{-17.5pt}
\caption{Features loaded by \pdepa. \label{fig:rq1}}

\medskip

\begin{tikzpicture}
    \colorlet{pdepadata}{darkgreen!90}
  \colorlet{pdepamean}{black}
  \def\pdepamin{1.700000}
  \def\pdepamax{155.220000}
  \def\pdepamean{77.887170}
  \def\pdepaderivation{18.556784}
  \def\pdepaderivlow{59.330386}  \def\pdepaderivhigh{96.443954}  \colorlet{emergedata}{blue!90}
  \colorlet{emergemean}{red}
  \def\emergemin{0.810000}
  \def\emergemax{53.580000}
  \def\emergemean{6.897412}
  \def\emergederivation{5.671832}
  \def\emergederivlow{1.22558}  \def\emergederivhigh{12.569244}\begin{axis}[
  legend style={
    at={(0.02,1.05)},
    anchor=north west
  },
  extra y tick style={
    tick label style={yshift=-.3em}
  },
  width=.5\textwidth,
  height=6cm,
    xmin=0, xmax=1000,
  axis x line=bottom,
  xticklabel style={},
      ymin=0,
  y axis line style={},
  ytick={ \emergemean, \emergederivhigh, \emergemax, 
\pdepamin, \pdepaderivlow, \pdepamean, 
\pdepaderivhigh, \pdepamax},
  extra y ticks={\emergederivlow},
      axis y line=left,
  legend columns=3,
  transpose legend,
]

    \draw[opacity=0.2, fill=pdepadata, draw=pdepadata] (0,{\pdepamean - \pdepaderivation}) rectangle (1000,{\pdepamean + \pdepaderivation});
  \addplot[draw=pdepadata, thin] table[x expr=\coordindex+1,y index=1, header=false] {data/pdepa_time.dat}; \addlegendentry{\pdepa}
  \draw[draw=pdepamean, dashed] (0,\pdepamean) -- (1000,\pdepamean);
    \addlegendimage{no markers, pdepamean, dashed}\addlegendentry{\pdepa
    mean~~~~}
  \addlegendimage{zone,draw=none,fill=pdepadata,fill
    opacity=0.2}\addlegendentry{ \pdepa SD}

    \draw[opacity=0.2, fill=emergedata, draw=emergedata] (0,{\emergemean - \emergederivation}) rectangle (1000,{\emergemean + \emergederivation});
  \addplot[draw=emergedata, thin] table[x expr=\coordindex+1,y index=1, header=false] {data/emerge_time.dat}; \addlegendentry{emerge}
  \draw[draw=emergemean, dashed] (0,\emergemean) -- (1000,\emergemean);
    \addlegendimage{no markers, emergemean, dashed}\addlegendentry{emerge mean}
  \addlegendimage{zone,draw=none,fill=emergedata,fill opacity=0.2}\addlegendentry{emerge SD}
\end{axis}
\node[text width=10mm, anchor=east] at (0.22,4.7) {\emph{\footnotesize Seconds}};
\end{tikzpicture}
\vspace{-17.5pt}
\caption{Execution times for \pdepa
  and \textsf{emerge}.\label{fig:rq4}}

\medskip

\colorlet{data}{brown!90}
\colorlet{mean}{black}
\def\min{861.900000}
\def\max{1222.630000}
\def\mean{949.176420}
\def\derivation{34.947321}
\def\derivlow{914.229099}\def\derivhigh{984.123741}
\begin{tikzpicture}
  \begin{axis}[
      legend style={
     at={(0.99,0.05)},
     legend columns=3,
     anchor=south east
   },
  yticklabel style={
    /pgf/number format/precision=1,
    /pgf/number format/fixed,
    /pgf/number format/fixed zerofill,
  },
  extra y tick style={
    tick label style={yshift=-.3em}
  },
  width=.5\textwidth,
  height=4cm,
    xmin=0, xmax=1000,
  axis x line=bottom,
  xticklabel style={},
        ymin=840,
  y axis line style={},
  ytick={\derivlow, \mean,  \derivhigh, \max},
  extra y ticks={\min},
    axis y discontinuity=crunch,
  axis y line=left,
]

  \draw[opacity=0.2, fill=data, draw=data] (0,{\mean - \derivation}) rectangle (1000,{\mean + \derivation});
    \addplot[draw=data, thin] table[x expr=\coordindex+1,y index=1,header=false] {data/full_time.dat};
  \addlegendentry{baseline}
  \draw[draw=mean, dashed] (0,\mean) -- (1000,\mean);
    \addlegendimage{no markers, mean, dashed}\addlegendentry{mean}
  \addlegendimage{zone,draw=none,fill=data,fill opacity=0.2}\addlegendentry{SD}
\end{axis}
\node[text width=10mm, anchor=east] at (0.22,2.7) {\emph{\footnotesize Seconds}};
\end{tikzpicture}
\caption{Baseline execution time.\label{fig:rq2}}
 \end{figure}

 \begin{figure}
   \bigskip
\xdefinecolor{darkgreen}{RGB}{175, 193, 36}
\begin{tikzpicture}
    \colorlet{pdepadata}{darkgreen!90}
  \colorlet{pdepamean}{black}
  \def\pdepamin{73.000000}
  \def\pdepamax{620.000000}
  \def\pdepamean{400.715000}
  \def\pdepaderivation{67.384944}
  \def\pdepaderivlow{333.330056}  \def\pdepaderivhigh{468.099944}  \colorlet{emergedata}{blue!90}
  \colorlet{emergemean}{red}
  \def\emergemin{41.000000}
  \def\emergemax{186.000000}
  \def\emergemean{77.793000}
  \def\emergederivation{23.421765}
  \def\emergederivlow{54.371235}  \def\emergederivhigh{101.214765}\begin{axis}[
  legend style={
    at={(0.02,1.10)},
    anchor=north west
  },
  extra y tick style={
    tick label style={yshift=-.3em}
  },
  width=.5\textwidth,
  height=6cm,
    xmin=0, xmax=1000,
  axis x line=bottom,
  xticklabel style={},
      ymin=0,
  y axis line style={},
  ytick={ \emergemean, \emergederivhigh, \emergemax, 
 \pdepaderivlow, \pdepamean, 
\pdepaderivhigh, \pdepamax},
  extra y ticks={\emergederivlow},
      axis y line=left,
  legend columns=3,
  transpose legend,
]

    \draw[opacity=0.2, fill=pdepadata, draw=pdepadata] (0,{\pdepamean - \pdepaderivation}) rectangle (1000,{\pdepamean + \pdepaderivation});
  \addplot[draw=pdepadata, thin] table[x expr=\coordindex+1,y index=1, header=false] {data/pdepa_memory.dat}; \addlegendentry{\pdepa}
  \draw[draw=pdepamean, dashed] (0,\pdepamean) -- (1000,\pdepamean);
    \addlegendimage{no markers, pdepamean, dashed}\addlegendentry{\pdepa
    mean~~~~}
  \addlegendimage{zone,draw=none,fill=pdepadata,fill
    opacity=0.2}\addlegendentry{ \pdepa SD}

    \draw[opacity=0.2, fill=emergedata, draw=emergedata] (0,{\emergemean - \emergederivation}) rectangle (1000,{\emergemean + \emergederivation});
  \addplot[draw=emergedata, thin] table[x expr=\coordindex+1,y index=1, header=false] {data/emerge_memory.dat}; \addlegendentry{emerge}
  \draw[draw=emergemean, dashed] (0,\emergemean) -- (1000,\emergemean);
    \addlegendimage{no markers, emergemean, dashed}\addlegendentry{emerge mean}
  \addlegendimage{zone,draw=none,fill=emergedata,fill opacity=0.2}\addlegendentry{emerge SD}
\end{axis}
\node[text width=10mm, anchor=east] at (0.22,4.7) {\emph{\footnotesize MB}};
\end{tikzpicture}
\caption{Memory consumption for \pdepa and \textsf{emerge}.\label{fig:rq4-mem}}

\medskip

\colorlet{data}{brown!90}
\colorlet{mean}{black}
\def\min{3016.000000}
\def\max{3980.000000}
\def\mean{3919.431800}
\def\derivation{70.847219}
\def\derivlow{3848.584581}\def\derivhigh{3990.279019}
\begin{tikzpicture}
\begin{axis}[
  legend style={
    at={(.99,0.08)},
    anchor=south east
  },
  yticklabel style={
    /pgf/number format/precision=1,
    /pgf/number format/fixed,
    /pgf/number format/fixed zerofill,
  },
  extra y tick style={
    tick label style={yshift=-.3em}
  },
  width=.5\textwidth,
  height=4cm,
    xmin=0, xmax=1000,
  axis x line=bottom,
  xticklabel style={},
        ymin=3000,
  y axis line style={},
  ytick={\derivlow, \mean,  \derivhigh, \max},
  extra y ticks={\min},
    axis y discontinuity=crunch,
  axis y line=left,
]

  \draw[opacity=0.2, fill=data, draw=data] (0,{\mean - \derivation}) rectangle (1000,{\mean + \derivation});
    \addplot[draw=data, thin] table[x expr=\coordindex+1,y index=1, header=false] {data/full_memory.dat}; \addlegendentry{baseline}
  \draw[draw=mean, dashed] (0,\mean) -- (1000,\mean);
    \addlegendimage{no markers, mean, dashed}\addlegendentry{mean}
  \addlegendimage{zone,draw=none,fill=data,fill opacity=0.2}\addlegendentry{SD}
\end{axis}
\node[text width=10mm, anchor=east] at (0.22,2.7) {\emph{\footnotesize MB}};
\end{tikzpicture}
\caption{Baseline memory consumption.\label{fig:rq2-mem}}
 
\bigskip

\pgfplotsset{compat=1.11,
    /pgfplots/ybar legend/.style={
    /pgfplots/legend image code/.code={       \draw[##1,/tikz/.cd,yshift=-0.25em]
        (0cm,0cm) rectangle (3pt,0.8em);},
   },
}

\begin{tikzpicture}
  \xdefinecolor{commondata}{RGB}{128, 0, 84}
  \colorlet{emergedata}{blue!40}
\begin{axis}[
  legend style={
    at={(0.5,-1.5)},anchor=north,
    legend columns=2
  },
  reverse legend,
  yticklabel style={
    color=white,
  },
  width=.5\textwidth,
  height=2cm,
    xmin=0, xmax=1000,
  axis x line=bottom,
  xticklabel style={},
      ymin=0,
  ytick={0,1},
    axis y line=left,
  ybar=0pt,
  bar width=.00005\textwidth,
]

  \addplot[draw=emergedata, fill=emergedata] table[x index=1,y index=2, header=false] {data/emerge_fail.dat}; \addlegendentry{\textsf{emerge} failure}
  \addplot[draw=commondata, fill=commondata] table[x index=1,y index=2, header=false] {data/common_fail.dat};
  \addlegendentry{no solution exists}
\end{axis}
\node[text width=15mm, anchor=west] at (-0.9,0.6) {\emph{\footnotesize
No solution\\[-2pt] found}};
\end{tikzpicture}
\caption{Product-discovery problems with no solution and \textsf{emerge} failures.\label{fig:rq3}}
\end{figure}

\paragraph{\textbf{RQ~\ref{rq:gentoo:completeness}.}} We investigated
the failures of a heuristics-based incomplete product-discovery tool
(\textsf{emerge}) compared to the cases when the complete lazy product
discovery algorithm showed that no solution exists, for the 1000
considered product-discovery problems.  Figure~\ref{fig:rq3} shows the
product-discovery problems for which \textsf{emerge} does not find a
product (red and blue bars). For the considered product-discovery
problems, \textsf{emerge} fails to find a valid configuration in
26.7\% of the cases.  In 0,7\% of the cases (red bars), no solution
exists.  Therefore, in 26\% of the cases, \textsf{emerge} fails to
solve a product-discovery problem that has a solution.  The
experiments show an interesting correlation between the failures of
\textsf{emerge} observed in Figure~\ref{fig:rq3} and the number of
features loaded by \pdepa during the product-discovery process: the
failures of \textsf{emerge} occur more frequently as the number of
loaded features needed for lazy product discovery increases. This can
be seen since the sorting of the $x$-axis is the same in
Figures~\ref{fig:rq1} and~\ref{fig:rq3}. In summary, on 1000 randomly
selected product-discovery problems, \textsf{emerge} fails to find a
solution that exists in around 26\% of the cases.

\paragraph{\textbf{RQ~\ref{rq:gentoo:time-vs-emerge}.}}
For research question RQ~\ref{rq:gentoo:time-vs-emerge}, we
investigated how well \pdepa performs as an alternative to the
state-of-the-art configuration tool
\textsf{emerge}. Figure~\ref{fig:rq4} shows the time for product
discovery using \pdepa (green line) and \textsf{emerge} (blue line).
The light green and the light blue bars show the standard deviations
and the correspondingly colored dashed lines show the mean times in
seconds for \pdepa and \textsf{emerge}, respectively. The difference
in mean times suggests that \pdepa is 11.29 times slower than
\textsf{emerge} in average, which corresponds to 70 additional
seconds.  However, as the results for RQ~\ref{rq:gentoo:completeness}
above shows that \textsf{emerge} fails for a significant number of the
considered product-discovery problems, lazy product discovery appears
to be a feasible alternative to \textsf{emerge}.

Figure~\ref{fig:rq4-mem} shows the memory consumption for product
discovery using \pdepa (green line) and \textsf{emerge} (blue line).
The light green and the light blue bars show the standard deviations
and the corresponding colored dashed lines show the mean memory
consumption in MB for \pdepa and \textsf{emerge}, respectively. The
difference in mean times suggests that \pdepa consumes four times more
memory than \textsf{emerge} in average (which amounts to around 300
MB).

In summary, lazy product discovery appears as a feasible alternative
to \textsf{emerge} if around one order of magnitude additional
computation time and four times additional memory consumption are
acceptable to always find products when these exist.

\subsection{Threats to Validity}\label{sec:Threats}

\subsubsection{External Validity} 
The results of the evaluation strongly depend on the product-discovery
problems considered in the experiments, i.e., on the feature models of
the Gentoo packages identified by the features in each
product-discovery problem.  Due to the lack of standard benchmarks, we
considered 1000 product-discovery problems that were randomly selected
from the 671617 features of the March 1st 2019 version of the Gentoo
distribution.  The random selection used the standard {\tt random}
python library~\cite{random}, that allows to get a set of elements
uniformly chosen from a given set.

Different product-discovery problems could potentially lead to
different results. We plan to investigate other product-discovery
problems for Gentoo and for other domains to get more insights.  In
particular, it would be interesting to investigate how lazy product
discovery performs when varying both the size and the amount of
interdependencies of the feature models (see
Section~\ref{sec:motivation}).

\subsubsection{Internal Validity} 
We used prototype implementations of the lazy product-discovery
algorithm and of the standard product-discovery algorithm.  Both
implementations rely on the Z3 solver~\cite{z3}. Z3 was chosen because
it is a mature solver and freely available.  The standard
product-discovery algorithm just performs a call to the Z3 solver.
The lazy product-discovery algorithm calls the Z3 solver whenever a
new feature fragment is loaded.  Using a different solver than Z3 may
affect the execution time and memory consumption of both the standard
and the lazy product-discovery algorithms. We plan to repeat the
experiments using another solver.

Introducing optimizations in the lazy product-discovery algorithm
could potentially reduce the number of loaded features, the execution
time, and the memory consumption for the algorithm.  One possible
optimization could be to pre-compute at compile time the modal
implication graphs~\cite{Krieter:2018:PCD,DBLP:conf/sigsoft/CosmoV11}
of features, which could potentially avoid loading feature models
that, e.g., are found to be conflicting in the pre-analysis. Another
possible optimization could be the definition and usage of an ad-hoc
search strategy for the back-end solver, instead of using solver's
default search strategy.

Another threat to validity is that Gentoo's package dependencies are
not formally specified, but only given in a textual representation. To
reduce the probability of errors in the implementation of the lazy
product-discovery algorithm, we have used unit tests to compare the
results of \pdepa with known correct products.  These unit tests were
performed by extending the package repository of \textsf{portage} with
custom testing and interdependent packages.

Possible bugs in Gentoo's package manager may also be considered a
threat to validity.  When performing the experiments, we identified
the following surprising behavior in \textsf{emerge}:
\begin{enumerate}
\item For some sets of packages\footnote{These sets consisted of packages with an
    identical SLOT ~\cite{PMS}. SLOTs are used in \textsf{portage} to
    identify which versions of the same package can coexist in one
    system.}, \textsf{emerge} implements a heuristic that only
  considers the feature model of the most recent package in the set,
  thus forgetting possible solutions.
\item For \textsf{emerge} to consider a package, some part of its
  feature model 
  must be configured. Specifically, some of its features must be
  selected or deselected such that the constraint identified by the
  variable REQUIRED\_USE~\cite{PMS} evaluates to \emph{true}.
\item For a given product-discovery problem, the dependency analysis
  of \textsf{emerge} considers each package individually.  This can
  trigger the installation of a package in conflict with the rest of
  the product-discovery problem, thus preventing the product-discovery
  problem to be solved even if it has a solution.
\end{enumerate}
We reported these issues to the Gentoo developer community, which
replied that they could be considered as bugs of \textsf{emerge}.

We were not able to install the Gentoo variants corresponding to the
products discovered by \pdepa because of Bug (3) above.  Indeed, in
many cases, \textsf{emerge}'s dependency solver triggers the
installation of packages that conflict with \pdepa's solution.  We
plan to overcome this limitation by extending \pdepa into a complete
package installation tool for Gentoo.

\section{Related Work}\label{sec:related-work}
We discuss related work on interfaces, composition, and configuration
of feature models.

\subsubsection*{Interfaces of Feature Models}

The feature-model \conextslice\ in this paper strengthens the
feature-model interfaces introduced by Schr{\"o}ter
\etal~\cite{SchroterICSE2016}, which, as pointed out in
Section~\ref{sec:feature-model-interface}, are closely related to
\emph{feature model slices} introduced by Acher
\etal~\cite{DBLP:conf/kbse/AcherCLF11}.  In the work of Acher
\etal~\cite{DBLP:conf/kbse/AcherCLF11}, the focus is on feature model
decomposition. In subsequent work~\cite{Acher:2014:EEA}, Acher \etal
address evolutionary changes for extracted variability models by using
the slice operator in combination with a merge operator, and focus on
detecting differences between feature-model versions during evolution.
Instead, Schr{\"o}ter \etal~\cite{SchroterICSE2016} study how feature
model interfaces can be used to support evolution for a feature model
composed from feature models fragments. Changes to fragments which do
not affect their interfaces do not require the overall feature model
to be rebuilt (by composing the fragments) in order to reanalyze it.
Challenges encountered to support evolution in software product line
engineering have previously been studied by Dhungana
\etal~\cite{DHUNGANA20101108}.  They use interfaces to hide
information in feature model fragments and save a merge history of
fragments to give feedback and facilitate fragment maintenance. No
automated analysis is considered.  In contrast to this  work on
feature model interfaces for evolution, the \conextslice\ in our work is 
for efficient automated product discovery in huge feature models
represented as interdependent feature model fragments.

\emph{Feature-model
  views}~\cite{10.1007/978-3-642-14192-8_12,DBLP:conf/compsac/MannionSA09,10.1007/978-3-642-33666-9_17}
focus on a subset of the relevant features of a given feature model,
similarly to feature-model interfaces.  Different views regarding one
master feature model are used to capture the needs of different
stakeholders, so that a product of the master feature model can be
identified based on the views' partial configurations.  This work on
multiple views to a product in a feature model is orthogonal to our
work on feature-model \conextslices, which targets the efficient
configuration of systems comprising many interdependent configurable
packages.

\subsubsection*{Composition of Feature Models}

Feature-model composition is often used for multi software product
lines (i.e., sets of interdependent product
lines)~\cite{IST2012Holl,DBLP:conf/splc/Krueger06,DBLP:conf/vamos/RosenmullerS10,lienhardtDDP18}.
Eichelberger and Schmid~\cite{Eichelberger:2013:SAT} provide an
overview of textual-modeling languages which support variability-model
composition (like FAMILIAR~\cite{ACHER2013657},
VELVET~\cite{Rosenmuller:2011:MVM}, TVL~\cite{CLASSEN20111130},
VSL~\cite{DBLP:conf/vamos/AbelePSTW10}) and compare how they support
composition, modularity, and evolution.  Acher
\etal~\cite{10.1007/978-3-642-41533-3_22} compare different
feature-model composition operators by considering possible
implementations and discuss advantages and drawbacks.  For the
investigation of efficient automated configuration of huge feature
models in this paper, we use the propositional representation of
feature models and a composition operator that corresponds to logical
conjunction.

\subsubsection*{Configuration of Feature Models}

Product discovery (also called product configuration or product
derivation) is the process of selecting and deselecting features in a
feature model in order to obtain a
product~\cite{DBLP:journals/computing/GalindoBTGR19}.  This is a
central and widely studied problem in the field of automated reasoning
\cite{DBLP:journals/is/BenavidesSC10}; e.g., more than 50 different
methods for product discovery are discussed in a recent
survey~\cite{DBLP:journals/computing/GalindoBTGR19}.

We are not aware of any method that addresses how complete and
efficient product-discovery can be achieved in configuration spaces
comprising different interdependent feature model fragments without
composing all the fragments.  The tool for lazy product discovery
is in the class of product discovery tools which
\emph{automatically} produce valid configurations.

Automated configuration is supported by a number of tools, including
FeatureIDE~\cite{THUM201470}, GEARS~\cite{DBLP:conf/splc/KruegerC18a},
GUIDSL~\cite{Batory:2005}, IBED~\cite{DBLP:journals/asc/XueZT0CC016},
HyVarRec~\cite{hyvar-rec-csp},
SATIBEA~\cite{DBLP:conf/icse/HenardPHT15} S2T2
Configurator~\cite{DBLP:conf/vamos/BotterweckJS09},
SIP~\cite{DBLP:journals/tosem/HieronsLLSZ16}, SPL
Conqueror~\cite{DBLP:journals/sqj/SiegmundRKKAS12},
S.P.L.O.T.~\cite{Mendonca:2009:SSP}, and
VariaMos~\cite{DBLP:conf/caise/MazoSD12}.  However, in contrast to our
work, all these tools are eager and require the building of the global
feature model by composing all its fragments. As such, these tools are
in line with the standard product discovery algorithm, as discussed in
Section~\ref{sec:practice}.

Some of these standard product discovery tools are interactive, i.e.,
they support and interact with the user by guiding her in producing a
valid configuration or finding one that maximizes her preferences
\cite{Batory:2005,Mendonca:2009:SSP,DBLP:conf/vamos/BotterweckJS09,
  DBLP:conf/caise/MazoSD12}.  Our method for lazy product discovery
can be exploited to support interactive product discovery either (i)
by requiring the user to enter preferences over different
configurations or (ii) by interacting with the user when deciding what
partial configuration should be extended (i.e., when the
$\texttt{select}$ function of the algorithm in Listing~\ref{lst:algo}
is performed). An extension of the lazy product discovery algorithm in
this direction is left as future work.

Different computational techniques can be used to solve the product
discovery problem: satisfiability solvers, constraint programming,
evolutionary algorithms, stochastic algorithms, or binary decision
diagrams \cite{DBLP:conf/vamos/OchoaPRCS17,Benavides07fama:tooling,
  DBLP:journals/is/BenavidesSC10}.  Due to the NP-hardness of the
configuration problem itself, most complete approaches rely on SAT
solvers~\cite{DBLP:conf/splc/Janota08,Mendonca:2009:SAF}, but more
recently, the use of more powerful backend solvers, such as constraint
solvers and SMT solvers, are starting to be explored for automatic
configuration of feature models \cite{hyvar-rec,featureIDE-SMT,
  DBLP:conf/cade/MichelHGH12,DBLP:conf/seke/BenavidesTC05}.  In our
work, we have used Z3~\cite{z3} which is one of the most powerful and
mature SMT solvers available today. We would like to remark, however,
that the lazy product discovery method itself is orthogonal to the
tool chosen, as long as the backend solver allows to implement the
$\texttt{pick\_cut}$, $\texttt{compose}$, and $\texttt{select}$
operations of Listing~\ref{lst:algo}.

\section{Conclusion and Future Work}\label{sec:conclusion}
Product discovery in huge configuration spaces represented as sets of
interdependent feature models is challenging.  Standard analysis
techniques for fragmented feature models require all the feature
models to be composed in order to apply the analysis.  Recent work has
shown that several analyses of fragmented feature models can be
simplified using techniques such as feature model interfaces and
slicing, however these techniques do not work for product discovery in
sets of interdependent feature models.

In this paper, we introduce a method for automated product discovery
in configuration spaces represented as sets of interdependent feature
models.  The method is lazy as features are added incrementally to the
analysis until a product is found.  We introduce and formalize the
feature model \conextslice, and leverage this concept to define a
product-discovery criterion.  We exploit this criterion to define a
complete and efficient algorithm for lazy product discovery in sets of
interdependent feature models.  We have evaluated the potential of
lazy product discovery on randomly constructed configuration problems
for the configuration space of the source-based Linux distribution
Gentoo, with 36197 interdependent feature models and a total of 671617
features. The evaluation has demonstrated significant gains compared
to standard product discovery and that the trade-off of performance
for completeness is reasonable compared to the heuristics-based
product-discovery with \textsf{emerge}, the command-line interface to
Gentoo's official package manager and distribution system
\textsf{Portage}.
 
We are now investigating different optimizations of the current
prototype, such as the exploitation of modal implication graphs
pre-computed at compile time and the usage of ad-hoc SMT search
strategies.  In future work we plan to investigate other
product-discovery problems for Gentoo as well as for other domains, to
gain more insights into lazy product discovery.  While our results
make us confident that lazy product discovery is a viable method for
product discovery in huge configuration spaces, we believe that it may
also be used to complement optimized but incomplete algorithms when
these fail, such as \textsf{emerge} for Gentoo.  We also plan to
investigate how lazy product discovery can be combined with
interactive product discovery.

\begin{acks}
This work is partially funded by the Sirius Center for
    Scalable Data Access and the Compagnia di San Paolo. We thank
    the reviewers for constructive feedback, Thomas Th\"um, Andrzej
    Wasowski and Sven Apel for useful discussions on the topic of this
    paper, and Simone Donetti for testing the publicly available
    artifact.
\end{acks}




\end{document}